\tikzset{middlearrow/.style={
        decoration={markings,
            mark= at position 0.6 with {\arrow{#1}} ,
        },
        postaction={decorate}
    }
}
\theoremstyle{definition}
\newtheorem{definition}{Definition}
\newtheorem{construction}{Construction}
\newtheorem{example}[definition]{Example}
\theoremstyle{plain}
\newtheorem{theorem}{Theorem}
\newtheorem{proposition}[definition]{Proposition}
\newtheorem{lemma}[definition]{Lemma}
\newtheorem{remark}[definition]{Remark}
\newtheorem{corollary}[definition]{Corollary}
\newtheorem{claim}[definition]{Claim}
\title{Maximally recoverable codes with locality and availability}
\author{Umberto Mart{\'i}nez-Pe\~{n}as \thanks{umberto.martinez@uva.es}}
\affil{IMUVa-Mathematics Research Institute,\\University of Valladolid, Spain}
\author{V. Lalitha \thanks{lalitha.v@iiit.ac.in}}
\affil{Signal Processing and Communications Research Center (SPCRC),\\International Institute of Information Technology, Hyderabad, India}
\date{}
\begin{document}

\maketitle

\begin{abstract}
In this work, we introduce maximally recoverable codes with locality and availability. We consider locally repairable codes (LRCs) where certain subsets of $ t $ symbols belong each to $ N $ local repair sets, which are pairwise disjoint after removing the $ t $ symbols, and which are of size $ r+\delta-1 $ and can correct $ \delta-1 $ erasures locally. Classical LRCs with $ N $ disjoint repair sets and LRCs with $ N $-availability are recovered when setting $ t = 1 $ and $ t=\delta-1=1 $, respectively. Allowing $ t > 1 $ enables our codes to reduce the storage overhead for the same locality and availability. In this setting, we define maximally recoverable LRCs (MR-LRCs) as those that can correct any globally correctable erasure pattern given the locality and availability constraints. We then identify a large class of global erasure patterns that can be corrected by such MR-LRCs and prove that they are all the correctable patterns when $ t=1 $. We provide three explicit constructions of LRCs that can correct such erasure patterns (thus MR-LRCs for $ t=1 $), based on MSRD codes, each attaining the smallest finite-field sizes for some parameter regime. Finally, we extend the known lower bound on finite-field sizes from classical MR-LRCs to our setting (for any value of $ t $).

\textbf{Keywords:} Availability, distributed storage, linearized Reed-Solomon codes, locally repairable codes, maximally recoverable codes, partial MDS codes, sum-rank metric.
%
\end{abstract}

\section{Introduction} \label{sec intro}

Distributed Storage Systems (DSSs) have become crucial for numerous cloud-based services and other applications. However, the large scale of stored data in recent times has made disk failures a very common phenomenon, which translates into node erasures in the storage network. Unfortunately, data replication requires a very large storage overhead and is thus not feasible in practice. The traditional solution is to use an erasure code. Among them, MDS codes have maximum information rates (or minimum storage overheads). However, if an MDS code encodes $ k $ information symbols into $ n $ storage nodes, then it requires contacting and downloading $ k $ remaining nodes in order to recover from one single erasure in the network, which is the most common scenario. 

Locally Repairable Codes (LRCs) \cite{gopalan} appeared as a solution to this problem, since they can correct a single erasure by contacting a small number $ r \ll k $ (called locality) of surviving nodes (where $ k $ is the number of information symbols), while at the same time being able to correct a large number of erasures in catastrophic cases. There exist LRCs whose (global) minimum distances attain the upper bounds depending only on the code dimension and the locality restrictions \cite{tamo-barg}. However, such LRCs may not be able to correct all of the correctable erasure patterns given the locality constraints. Those LRCs who can are called maximally recoverable LRCs or MR-LRCs \cite{blaum-RAID, gopalan-MR} (sometimes they are also called partial MDS codes in the literature \cite{blaum-RAID}). Although they have the strongest erasure-correction capabilities among LRCs, there exist lower bounds on their field sizes indicating these have to grow superlinearly with the code length \cite{gopi}. Among different constructions, those from \cite{cai-field, gopi-field} asymptotically attain such lower bounds in some parameter regimes. 

Originally, LRCs were able to correct only one erasure locally \cite{gopalan}. In order to expand this local erasure-correction capability, two generalizations appeared. First, LRCs that can correct up to $ \delta - 1 $ erasures in each local repair set were introduced \cite{kamath}. In such LRCs, each symbol belongs to a local repair set, which we will assume are disjoint and of size $ r+\delta -1 $, as usual in the MR-LRC literature \cite{cai-field, gopi-field, gopi}. When up to $ \delta-1 $ erasures occur in one such local set, the erasures may be corrected by contacting at most $ r $ remaining nodes in that local set. A second approach is to provide each locally repairable symbol with availability \cite{availability, wang}. In LRCs with $ N $ availability, each information symbol belongs to $ N $ local sets which are pairwise disjoint after removing that one symbol. LRCs with availability also enable good parallel recovery, since the local sets can be accessed independently in order to access a target information symbol (since they are pairwise disjoint after removing that symbol). Parallel recovery is crucial in distributed storage systems involved in the management of hot data, where information is frequently accessed.  

A hybrid solution was proposed in \cite{cai-disjoint}, where an information symbol belongs to multiple local repair sets (pairwise disjoint after removing the symbol) that each can correct up to $ \delta - 1 $ erasures locally. However, both the works on LRCs with availability \cite{availability} and their generalization \cite{cai-disjoint} only studied them with respect to their (global) minimum distance, providing upper bounds and matching constructions. However, the fact that their minimum distance attains the upper bounds does not mean that such codes may correct every globally correctable erasure pattern (for the given locality constraints).

In this work, we introduce maximally recoverable LRCs with availability, which can correct all such globally correctable erasure patterns and have not yet been considered to the best of our knowledge. Furthermore, we consider a more general setting than that from \cite{cai-disjoint}. We consider LRCs where certain subsets of $ t $ symbols belong to $ N $ local repair sets (pairwise disjoint after removing such $ t $ symbols) and each of such local sets may correct up to $ \delta-1 $ erasures locally. LRCs with multiple disjoint repair sets as in \cite{cai-disjoint} are recovered from our definition by setting $ t=1 $, and the original LRCs with availability \cite{availability, wang} are recovered from our definition by setting $ t=\delta-1=1 $. Our LRCs with $ t>1 $ still provide locality $ r $, local correction $ \delta -1 $ and availability $ N $ to the $ t $ considered symbols but decrease the number of required local parities, thus reducing storage overhead compared to \cite{cai-disjoint, availability} (see Remark \ref{remark comparison cai parities}).

As was the case for classical MR-LRCs, since their correction capabilities are given by the locality constraints, our focus in this work will be to provide explicit constructions of MR-LRCs with availability that attain the smallest possible finite-field sizes. 

Our contributions are as follows. In Section \ref{sec problem description}, we describe the local erasure-correction setting and the LRCs that we consider in this work. We define MR-LRCs with availability as those that can globally correct any erasure pattern correctable by some LRC with availability for the same parameter values. We then identify a large family of global erasure patterns that such MR-LRCs with availability can correct, and we prove that they are all such erasure patterns when $ t=1 $. We conclude the section with Table \ref{table comparisons}, which summarizes the field sizes of the explicit constructions and the lower field size bound that we will obtain in the following sections. In Section \ref{appendix}, we revisit Maximum Sum-Rank Distance (MSRD) codes, which will be used for our constructions of MR-LRCs with availability. In the following three sections, we provide three constructions of LRCs with availability that can correct the previously identified erasure patterns, which in particular means they are MR-LRCs with availability in the case $ t=1 $. In Section \ref{sec const generator}, we provide a first explicit construction using generator matrices of MSRD codes. When $ t=1 $ and the MSRD code is a linearized Reed--Solomon code \cite{linearizedRS}, this construction coincides with \cite[Const. B]{cai-disjoint}. However, the maximal recovery property of such codes was not proven in \cite{cai-disjoint}. In Sections \ref{sec const parity-check} and \ref{sec const parity-check second}, we provide two new explicit constructions based on parity-check matrices of MSRD codes. As one can see in Table \ref{table comparisons}, each of these constructions attains a strictly smaller field size than the other two in some parameter regime (since the dominant term in the exponent depends on a different parameter), both when $ t=1 $ and when $ t>1 $. For realistic parameter regimes, Constructions \ref{construction 2} and \ref{construction 3} have smaller field sizes than Construction \ref{construction 1}. Finally, in Section \ref{sec lower bound}, we extend the lower bound on the field size of MR-LRCs from \cite{gopi} to our general setting.

\subsection*{Notation}

We fix a prime power $ q $ and denote by $ \mathbb{F}_q $ the finite field of size $ q $. We denote by $ \mathbb{F} $ an arbitrary field. For positive integers $ m $ and $ n $, we denote by $ \mathbb{F}^{m \times n} $ the set of matrices over $ \mathbb{F} $ of size $ m \times n $, and we denote $ \mathbb{F}^n = \mathbb{F}^{1 \times n} $. If $ m \leq n $, we define $ [m,n] = \{ m,m+1, \ldots, n \} $ and $ [n] = [1,n] $. Given a code $ \mathcal{C} \subseteq \mathbb{F}^n $, a matrix $ A \in \mathbb{F}^{m \times n} $, a vector $ \mathbf{c} \in \mathbb{F}^n $, and a subset $ R \subseteq [n] $, we denote by $ \mathcal{C}|_R \subseteq \mathbb{F}^{|R|} $, $ A|_R \in \mathbb{F}^{m \times |R|} $ and $ \mathbf{c}|_R \in \mathbb{F}^{|R|} $ their projections onto the coordinates in $ R $. For a matrix $ A \in \mathbb{F}^{m \times n} $ and a subset $ \mathcal{C} \subseteq \mathbb{F}^m $, we denote $ \mathcal{C} A = \{ \mathbf{c} A : \mathbf{c} \in \mathcal{C} \} $ and we denote by $ A^\intercal \in \mathbb{F}^{n \times m} $ the transpose of $ A $. Finally, for matrices $ A_i \in \mathbb{F}^{m_i \times n_i} $, for $ i \in [g] $, we denote by $ {\rm diag}(A_1, A_2, \ldots, A_g) \in \mathbb{F}^{m \times n} $ the block diagonal matrix with $ A_1, A_2, \ldots, A_g $ in the diagonal blocks, where $ m = m_1 + \cdots + m_g $ and $ n = n_1 + \cdots + n_g $.

\section{Problem Description and Main Results} \label{sec problem description}

In this section, we introduce the local erasure-correction setting that we will consider in this work and define MR-LRCs with availability as those LRCs with availability that can globally correct any erasure pattern that is correctable by some LRC with availability for the same parameter values. We then identify a large family of globally erasure patterns that are correctable by such MR-LRCs, showing that they are all the correctable patterns when $ t=1 $. We conclude with Table \ref{table comparisons}, which summarizes the field sizes obtained by our three explicit constructions from Sections \ref{sec const generator}, \ref{sec const parity-check} and \ref{sec const parity-check second}, together with the lower bound that we will obtain in Section \ref{sec lower bound}.

\begin{definition} \label{def LRC availability}
Consider positive integers $ r $, $ \delta $, $ t $, $ g $ and $ N $, where $ r $ will be the locality (or dimension of the local codes), $ \delta $ the local distance (hence the code will be able to correct $ \delta-1 $ erasures locally), $ g $ will be the number of disjoint blocks of local correction sets, $ t $ will be the number of symbols that have availability in each of the $ g $ blocks and $ N $ will be the availability (the number of local repair sets for such symbols). We define the code length  as
$$ n = g (t + N (r+\delta -1 -t)). $$
Assume that $ t \leq r $ and consider subsets $ T_i, R_{i,j} \subseteq [n] $, for $ i \in [g] $ and $ j \in [N] $, such that $ T_i \subseteq R_{i,j} $, $ |T_i| = t $, $ |R_{i,j}| = r+\delta - 1 $ (thus $ | R_{i,j} \setminus T_i | \geq \delta - 1 $ since $ t \leq r $) and
$$ R_{i,j} \cap R_{i,\ell} = T_i, $$
if $ j \neq \ell $, for $ j, \ell \in [N] $ and for $ i \in [g] $. In particular, $ T_i = \bigcap_{j=1}^N R_{i,j} $ if $ N \geq 2 $. Denote
$$ R_i = \bigcup_{j=1}^N R_{i,j} = T_i \cup \left( \bigcup_{j=1}^N (R_{i,j}\setminus T_i) \right) , $$
for $ i \in [g] $, and assume that
$$ [n] = \bigcup_{i=1}^g R_i \quad \textrm{and} \quad R_i \cap R_\ell  = \varnothing $$
(in particular, $ T_i \cap T_\ell = \varnothing $) if $ i \neq \ell $, for $ i,\ell \in [g] $. We say that a code $ \mathcal{C} \subseteq \mathbb{F}^n $ is a locally repairable code (LRC) of type $ (r,\delta,N,t) $ if 
$$ {\rm d}(\mathcal{C}|_{R_{i,j}}) \geq \delta, $$
for sets $ R_{i,j} $ as above, for all $ j \in [N] $ and $ i \in [g] $. We further say that $ \mathcal{C} $ has $ N $-availability if $ t \leq \delta - 1 $. If in addition, $ T = \bigcup_{i=1}^g T_i $ contains an information set (i.e., a set $ I \subseteq [n] $ such that the projection of the code onto the coordinates in $ I $ is one to one), then we say that $ \mathcal{C} $ has information $ N $-availability.
\end{definition}

Notice that, after a reordering of the coordinates in $ [n] $, we may assume that
$$ T_1 = [t], \quad R_{1,j} = [t] \cup [t + (j-1)(r+\delta -1 -t) + 1, t + j(r+ \delta - 1- t)], $$
for $ j \in [N] $, and 
\begin{equation}
\begin{split}
T_i & = (i-1)(t+N(r+\delta - 1-t)) + T_1, \\
R_{i,j} & = (i-1)(t+N(r+\delta - 1-t)) + R_{1,j},
\end{split}
\label{eq sets T_i R_ij}
\end{equation}
for $ j \in [N] $ and $ i \in [g] $. We may illustrate the interval $ [n] $ and these sets as follows:

\begin{center}
\resizebox{\textwidth}{!}{
\begin{tikzpicture}[scale=0.6]
    \draw (1,0)-- (26,0); 
    \foreach \x in {3,6,9,12,15,17,20,23} {
        \draw (\x,1) -- (\x,-0.2) node {};
    }    
    
    \draw (1,1) -- (1,-0.5) node[below] {$1$};
    \draw (26,1) -- (26,-0.5) node[below] {$n$};

	\draw (2,0) node[above] {$T_1$};
	\draw (4.5,0) node[above] {$R_{1,1} \setminus T_1$};
	\draw (7.5,0) node[above] {$\ldots$};
	\draw (10.5,0) node[above] {$R_{1,N} \setminus T_1$};
	\draw (13.5,0) node[above] {$\ldots$};
	\draw (16,0) node[above] {$T_g$};
	\draw (18.5,0) node[above] {$R_{g,1} \setminus T_g$};
	\draw (21.5,0) node[above] {$\ldots$};
	\draw (24.5,0) node[above] {$R_{g,N} \setminus T_g$};
	
	\draw (3,-0.5) node[below] {$t$};
	\draw (6,-0.5) node[below] {$r+\delta -1$};
	\draw (12,-0.5) node[below] {\tiny $\begin{array}{c}
	t +N(r+ \\
	\delta -1-t)
	\end{array}$};
	\draw (15,-0.5) node[below] {\tiny $\begin{array}{c}
	(g-1)(t + \\
	N(r+\delta -1-t))
	\end{array}$};
    
\end{tikzpicture}
}
\end{center}

\begin{remark}
According to this definition, for an LRC of type $ (r,\delta,N,t) $, the symbols in $ R_{i,j} $ have $ (r,\delta) $-locality according to \cite[Def. 1]{kamath}, since they can be recovered locally by using $ r $ symbols in $ R_{i,j} $ after erasing $ \delta - 1 $ other symbols from such a set. Furthermore, the symbols in $ T_i $ can be recovered after $ \delta - 1 $ erasures from any set $ R_{i,j} $, for $ j \in [N] $. When $ t = |T_i| \leq \delta - 1 $, then all the symbols in $ T_i $ can be recovered from any $ r $ symbols in any of the pairwise disjoint sets $ R_{i,j} \setminus T_i $, for $ j \in [N] $, and therefore the symbols in $ T = \bigcup_{i=1}^g T_i $ have $ N $-availability according to \cite[Def. 1]{availability} and \cite[Def. 1]{wang}. 
\end{remark}

\begin{remark} \label{remark comparison cai parities}
When $ t = 1 $, we recover LRCs with multiple disjoint repair sets as in \cite[Def. 4]{cai-disjoint}. However, when $ t \geq 2 $, notice that we are packing as many symbols in the intersections $ T_i = \bigcap_{j=1}^N R_{i,j} $ as possible, which still guarantees $ N $-availability for such symbols whenever $ t = |T_i| \leq \delta - 1 $. That is because in this case, if all $ t $ symbols in $ T_i $ are erased, we may still recover them using $ r $ symbols from each of the pairwise disjoint repair sets $ R_{i,j} \setminus T_i $, for $ j \in [N] $. Notice that, in this way, if we have $ k = gt $ information symbols, then we reduce the total number of local parities from
$$ kN(\delta -1) \quad \textrm{(the work \cite{cai-disjoint})} \qquad \textrm{to} \qquad gN(\delta-1) = \frac{kN(\delta-1)}{t} \quad \textrm{(this work)} . $$
If we further push the parameters to satisfy $ t = \delta -1 $, then the total number of local parities in this work is $ kN $, much lower than $ kN(\delta -1) $ (when $ \delta-1 \geq 2 $) as in \cite{cai-disjoint}. See also the example in Fig. \ref{fig local sets}.
\end{remark}

The assumption $ | R_{i,j} \setminus T_i | \geq \delta - 1 $, for $ j \in [N] $ and $ i \in [g] $, means that the pairwise disjoint sets $ R_{i,j} \setminus T_i $ each contains at least $ \delta - 1 $ local parities. After removing all of them, we obtain $ k $ information symbols plus a number $ h \geq 0 $ of global or heavy parities. Thus the number of heavy parities is defined as
$$ h = n - gN (\delta-1) - k = g(t + N(r-t)) - k \geq 0, $$
which in particular implies that $ k \leq g(t+N(r-t)) $. When $ N = 1 $, then $ h $ coincides with the number of heavy parities as considered in the maximally recoverable code literature \cite[Def. 2]{gopalan-MR}.

Before identifying globally correctable erasure patterns, we first describe those which are locally correctable. Notice that, for an erasure pattern $ \mathcal{E} \subseteq [n] $ to be correctable locally, we need to have $ | \mathcal{E} \cap R_{i,j_i} | \leq \delta - 1 $, for some $ j_i \in [N] $, for all $ i \in [g] $. Then, once we correct the erasures in $ \mathcal{E} \cap R_{i,j_i} $, there will be no erasures inside $ T_i $ (since $ T_i \subseteq R_{i,j_i} $). Hence, in order to continue correcting erasures in $ \mathcal{E} $ locally, we only need that $ (R_{i,\ell} \setminus T_i) \cap \mathcal{E} $ contains at most $ \delta - 1 $ elements, for $ \ell \in [N] \setminus \{ j_i \} $. 

\begin{definition} \label{def locally correctable erasures}
In the setting of Definition \ref{def LRC availability}, we say that an erasure pattern $ \mathcal{E} \subseteq [n] $ is locally correctable if, for every $ i \in [g] $, there exists $ j_i \in [N] $ such that
\begin{enumerate}
\item
$ | \mathcal{E} \cap R_{i,j_i} | \leq \delta - 1 $, and
\item
$ |(R_{i,\ell} \setminus T_i) \cap \mathcal{E}| \leq \delta - 1 $, for all $ \ell \in [N] \setminus \{ j_i \} $.
\end{enumerate}
We say that $ \mathcal{E} $ is maximal if equalities hold in the previous two items. In such a case, it must hold that
$$ |\mathcal{E}| = gN(\delta - 1). $$
See also Fig. \ref{fig local sets} for an example.
\end{definition}

\begin{figure*} [!h]
\begin{center}
\resizebox{\textwidth}{!}{
\begin{tabular}{c@{\extracolsep{1cm}}c}
\begin{tikzpicture}[
square/.style = {draw, rectangle, 
                 minimum size=\m, outer sep=0, inner sep=0, font=\small,
                 },
                        ]
\def\m{26pt}
\def\w{8}
\def\h{8}
    \pgfmathsetmacro\uw{int(\w/2)}
    \pgfmathsetmacro\uh{int(\h/2)}

\def\i{18}
  \foreach \x in {1,...,\w}
    \foreach \y in {1,...,\h}
       {    
			\node [square, fill=white]  (\x,\y) at (\x*\m + \i*\m,-\y*\m) {$ x_{\y,\x} $};        
       }
  \foreach \x in {1,...,\w}
    \foreach \y in {1,2,7,8}
       {    
			\node [square, fill=gray!50]  (\x,\y) at (\x*\m + \i*\m,-\y*\m) {$ x_{\y,\x} $};
       }
  \foreach \x in {1,...,\w}
    \foreach \y in {4,5}
       {    
			\node [square, fill=white]  (\x,\y) at (\x*\m + \i*\m,-\y*\m) {$ x_{\y,\x} $};
       }   
   
   \draw[thick, dotted] (0.4*\m + \i*\m, -0.4*\m) rectangle (1.4*\m + \i*\m,-5.6*\m); 
   \draw[thick, dashed] (0.6*\m + \i*\m, -3.4*\m) rectangle (1.6*\m + \i*\m,-8.6*\m); 
   \draw[thick, dashdotted] (0.3*\m + \i*\m, -3.3*\m) rectangle (8.6*\m + \i*\m,-5.7*\m); 
        
\end{tikzpicture}

\begin{tikzpicture}[
square/.style = {draw, rectangle, 
                 minimum size=\m, outer sep=0, inner sep=0, font=\small,
                 },
                        ]
\def\m{26pt}
\def\w{8}
\def\h{8}
    \pgfmathsetmacro\uw{int(\w/2)}
    \pgfmathsetmacro\uh{int(\h/2)}

\def\i{18}
  \foreach \x in {1,...,\w}
    \foreach \y in {1,...,\h}
       {    
			\node [square, fill=white]  (\x,\y) at (\x*\m + \i*\m,-\y*\m) {$ x_{\y,\x} $};        
       }
  \foreach \x in {1,...,\w}
    \foreach \y in {1,2,7,8}
       {    
			\node [square, fill=gray!50]  (\x,\y) at (\x*\m + \i*\m,-\y*\m) {$ x_{\y,\x} $};
       }
  \foreach \x in {1,...,\w}
    \foreach \y in {4,5}
       {    
			\node [square, fill=white]  (\x,\y) at (\x*\m + \i*\m,-\y*\m) {$ x_{\y,\x} $};
       }

   \node [square, fill=gray!50]  at (1*\m + \i*\m,-1*\m) {$ \spadesuit $};   
   \node [square, fill=white]  at (1*\m + \i*\m,-3*\m) {$ \spadesuit $}; 
   \node [square, fill=white]  at (1*\m + \i*\m,-4*\m) {$ \spadesuit $};   
   \node [square, fill=gray!50]  at (1*\m + \i*\m,-7*\m) {$ \spadesuit $}; 
   
   \node [square, fill=gray!50]  at (2*\m + \i*\m,-2*\m) {$ \spadesuit $};   
   \node [square, fill=white]  at (2*\m + \i*\m,-5*\m) {$ \spadesuit $}; 
   \node [square, fill=gray!50]  at (2*\m + \i*\m,-7*\m) {$ \spadesuit $};   
   \node [square, fill=gray!50]  at (2*\m + \i*\m,-8*\m) {$ \spadesuit $}; 
   
   \node [square, fill=white]  at (3*\m + \i*\m,-3*\m) {$ \spadesuit $};   
   \node [square, fill=white]  at (3*\m + \i*\m,-4*\m) {$ \spadesuit $}; 
   \node [square, fill=white]  at (3*\m + \i*\m,-6*\m) {$ \spadesuit $};   
   \node [square, fill=gray!50]  at (3*\m + \i*\m,-8*\m) {$ \spadesuit $}; 
   
   \node [square, fill=gray!50]  at (4*\m + \i*\m,-1*\m) {$ \spadesuit $};   
   \node [square, fill=gray!50]  at (4*\m + \i*\m,-2*\m) {$ \spadesuit $}; 
   \node [square, fill=white]  at (4*\m + \i*\m,-6*\m) {$ \spadesuit $};   
   \node [square, fill=gray!50]  at (4*\m + \i*\m,-7*\m) {$ \spadesuit $}; 
   
   \node [square, fill=white]  at (5*\m + \i*\m,-4*\m) {$ \spadesuit $};   
   \node [square, fill=white]  at (5*\m + \i*\m,-5*\m) {$ \spadesuit $}; 
   \node [square, fill=white]  at (5*\m + \i*\m,-6*\m) {$ \spadesuit $};   
   \node [square, fill=gray!50]  at (5*\m + \i*\m,-7*\m) {$ \spadesuit $}; 
   
   \node [square, fill=white]  at (6*\m + \i*\m,-3*\m) {$ \spadesuit $};   
   \node [square, fill=white]  at (6*\m + \i*\m,-4*\m) {$ \spadesuit $}; 
   \node [square, fill=gray!50]  at (6*\m + \i*\m,-1*\m) {$ \spadesuit $};   
   \node [square, fill=gray!50]  at (6*\m + \i*\m,-7*\m) {$ \spadesuit $}; 
   
   \node [square, fill=gray!50]  at (7*\m + \i*\m,-1*\m) {$ \spadesuit $};   
   \node [square, fill=gray!50]  at (7*\m + \i*\m,-2*\m) {$ \spadesuit $}; 
   \node [square, fill=gray!50]  at (7*\m + \i*\m,-7*\m) {$ \spadesuit $};   
   \node [square, fill=gray!50]  at (7*\m + \i*\m,-8*\m) {$ \spadesuit $}; 
   
   \node [square, fill=gray!50]  at (8*\m + \i*\m,-2*\m) {$ \spadesuit $};   
   \node [square, fill=white]  at (8*\m + \i*\m,-3*\m) {$ \spadesuit $}; 
   \node [square, fill=white]  at (8*\m + \i*\m,-4*\m) {$ \spadesuit $};   
   \node [square, fill=white]  at (8*\m + \i*\m,-5*\m) {$ \spadesuit $};

   \draw[thick, dashdotted] (0.4*\m + \i*\m, -3.4*\m) rectangle (8.6*\m + \i*\m,-5.6*\m);         
     
\end{tikzpicture}

\end{tabular}
}
\end{center}

\caption{On the left, illustration of a codeword in a code as in Definition \ref{def LRC availability} for $ r = 3 $, $ \delta = 3 $, $ t = 2 $, $ g = 8 $ and $ N = 2 $. The symbols are arranged in a matrix, where the $ i $th column corresponds to $ R_i $, the first $ 5 $ symbols are those in $ R_{i,1} $ (dotted line) and the last $ 5 $ symbols are those in $ R_{i,2} $ (dashed line). The symbols in the $ 4 $th and $ 5 $th rows (dash-dotted line) are those in $ T = \bigcup_{i=1}^g T_i $ (notice that $ T_i = R_{i,1} \cap R_{i,2} $). For example, if we index the symbols as in Definition \ref{def LRC availability}, we would have that $ x_{4,1}, x_{5,1}, x_{1,1}, x_{2,1}, x_{3,1}, x_{6,1}, x_{7,1}, x_{8,1} $ would be the first $ 8 $ symbols of the codeword, corresponding to $ R_1 = [8] $. Local parities are depicted in grey. If all information symbols correspond to the set $ T = \bigcup_{i=1}^g T_i $, then $ k = gt = 16 $, and then the total number of local parities is $ kN = 32 $, in contrast with $ kN(\delta-1) = 64 $ as in \cite{cai-disjoint}, while we still guarantee $ N $ disjoint repair sets that can recover from $ \delta -1 $ erasures each, for each information symbol (see Remark \ref{remark comparison cai parities}). On the right, an example of a maximal locally correctable erasure pattern as in Definition \ref{def locally correctable erasures}. If the code is maximally recoverable, then the restricted code obtained after removing such coordinates is MDS of length $ k+h = 32 $ and dimension $ k = 16 $.}
\label{fig local sets}
\end{figure*}

As is usual, a maximally recoverable LRC in this scenario must (globally) correct any erasure pattern that can be corrected by some other LRC of the same parameters.

\begin{definition} \label{def mr}
An LRC $ \mathcal{C} \subseteq \mathbb{F}^n $ of type $ (r,\delta, N,t) $ with information $ N $-availability as in Definition \ref{def LRC availability} is called maximally recoverable (or an MR-LRC) if it can correct any erasure pattern $ \mathcal{E} \subseteq [n] $ that is correctable by some LRC of type $ (r,\delta,N,t) $ with information $ N $-availability. 
\end{definition}

\begin{remark}
Observe that classical MR-LRCs \cite{blaum-RAID, gopalan-MR} are recovered from Definition \ref{def mr} simply by setting $ N = 1 $, in which case the value of $ t $ is irrelevant.
\end{remark}

We now identify a family of erasure patterns that maximally recoverable codes can globally correct. We will prove that, in the case $ t=1 $, they are in fact all of the erasure patterns correctable by maximally recoverable codes.

\begin{definition} \label{def correctable patterns}
Consider parameters as in Definition \ref{def LRC availability}. We say an erasure pattern $ \mathcal{E} \subseteq [n] $ is simple if
\begin{enumerate}
\item
$ \mathcal{E} = \mathcal{E}_1 \cup \mathcal{E}_2 $, where $ \mathcal{E}_1 \cap \mathcal{E}_2 = \varnothing $,
\item
$ \mathcal{E}_2 $ is a locally correctable pattern (Definition \ref{def locally correctable erasures}), and
\item
$ |\mathcal{E}_1| \leq h = g(t + N(r-t)) - k $.
\end{enumerate}
\end{definition}

\begin{remark} \label{remark simple patterns if MDS}
Let $ \mathcal{C} \subseteq \mathbb{F}^n $ be an LRC of type $ (r,\delta, N,t) $ with information $ N $-availability as in Definition \ref{def LRC availability}. Then $ \mathcal{C} $ can correct all simple erasure patterns as in Definition \ref{def correctable patterns} if, and only if, for any erasure pattern $ \mathcal{E} \subseteq [n] $ that is locally correctable (Definition \ref{def locally correctable erasures}) and of maximum size $ |\mathcal{E}| = gN(\delta - 1) $, the restricted code $ \mathcal{C}_{glob}|_{\overline{\mathcal{E}}} $ is MDS, where $ \overline{\mathcal{E}} = [n] \setminus \mathcal{E} $.
\end{remark}

We now show that these erasure patterns are correctable by maximally recoverable codes and, when $ t = 1 $, they constitute all of the correctable erasure patterns.

\begin{theorem} \label{thm:correctable}
An MR-LRC of type $ (r,\delta, N,t) $ with information $ N $-availability can correct all of the erasure patterns given in Definition \ref{def correctable patterns}. Moreover, when $ t = 1 $, these are all of the erasure patterns that such a code can correct.
\end{theorem}
\begin{proof}
The erasure patterns from Definition \ref{def mr} are correctable by the constructions that we will give in Sections \ref{sec const generator}, \ref{sec const parity-check} and \ref{sec const parity-check second}. Therefore, the first part follows from Definition \ref{def mr} (i.e., there exists some LRC for the corresponding parameter values that can correct such erasure patterns).

We set $ t=1 $ for the remainder of the proof. Let $ \mathcal{E} \subseteq [n] $ be an erasure pattern that can be corrected by some LRC $ \mathcal{C} \subseteq \mathbb{F}^n $ of type $ (r,\delta,N,t) $ with information $ N $-availability. We only need to prove that $ \mathcal{E} $ is simple as in Definition \ref{def correctable patterns}.

We will modify $ \mathcal{E} $ successively in a way that, in each iteration, the next pattern is correctable and simple if, and only if, so was the previous pattern. We will arrive at a correctable pattern that is easy to see that is simple, and we will be done (i.e., the original pattern $ \mathcal{E} $ must be simple too).

Let $ i \in [g] $ and consider the case that $ |\mathcal{E} \cap R_i| \leq N(\delta - 1) $. There must exist $ j_i \in [N] $ such that $ |\mathcal{E} \cap R_{i,j_i}| \leq \delta - 1 $. This is because if $ |\mathcal{E} \cap R_{i,j}| \geq \delta $, for all $ j \in [N] $, then $ |\mathcal{E} \cap R_i | \geq N(\delta - 1) + 1 $, since $ |R_{i,j} \cap R_{i,\ell}| = |T_i| = t = 1 $ if $ i \neq \ell $ (here we use the assumption $ t=1 $). Since $ \mathcal{C} $ is an LRC, we can remove the entire set $ \mathcal{E} \cap R_{i,j_i} $ from $ \mathcal{E} $ (i.e., locally correct it) and replace it by $ |\mathcal{E} \cap R_{i,j_i}| $ elements inside $ R_{i,j_i} \setminus T_i $ (recall that $ |R_{i,j_i} \setminus T_i| \geq \delta - 1 $). In this way, we may assume that $ \mathcal{E} \cap T_i = \varnothing $. Now, if $ | \mathcal{E} \cap (R_{i,j} \setminus T_i) | < \delta - 1 $, for some $ j \in [N] $, then we may add elements to $ \mathcal{E} \cap (R_{i,j} \setminus T_i) $ until $ | \mathcal{E} \cap (R_{i,j} \setminus T_i) | = \delta - 1 $, since in this case $ \mathcal{E} \cap (R_{i,j} \setminus T_i) $ is locally correctable and $ |R_{i,j} \setminus T_i| \geq \delta - 1 $. Hence we may assume that $ | \mathcal{E} \cap (R_{i,j} \setminus T_i) | \geq \delta - 1 $, for all $ j \in [N] $. 

In other words, we may assume that $ |\mathcal{E} \cap R_i| \geq N(\delta - 1) $, i.e., 
$$ | \mathcal{E} \cap R_i | = N(\delta - 1) + e_i, $$
where $ e_i \geq 0 $, for all $ i \in [g] $, and where $ | \mathcal{E} \cap (R_{i,j} \setminus T_i)| = \delta - 1 $ for all $ j \in [N] $ if $ e_i=0 $. Note that, since $ \mathcal{E} $ is correctable, then it must hold that
$$ \sum_{i=1}^g e_i \leq h. $$

Now let $ i \in [g] $ be such that $ e_i > 0 $. We distinguish two cases. First, consider the case $ \mathcal{E} \cap T_i = \varnothing $. As before, since $ \mathcal{C} $ is locally correctable, we may assume that $ |\mathcal{E} \cap (R_{i,j} \setminus T_i) | \geq \delta - 1 $, for all $ j \in [N] $. In this case, we set
$$ |\mathcal{E} \cap (R_{i,j} \setminus T_i)| = \delta - 1 + e_{i,j}, $$
where $ e_{i,j} \geq 0 $, for all $ j \in [N] $. The other possible case is $ T_i \subseteq \mathcal{E} $ since $ |T_i| = t = 1 $ (here we also use the assumption $ t=1 $). Once again, if $ |\mathcal{E} \cap (R_{i,j}\setminus T_i)| \leq \delta -2 $, for some $ j \in [N] $, then we may again add elements to $ \mathcal{E} \cap (R_{i,j}\setminus T_i) $ until $ |\mathcal{E} \cap (R_{i,j}\setminus T_i)| = \delta -2 $, which means $ |\mathcal{E} \cap R_{i,j}| = \delta - 1 $. In this second case, we set 
$$ |\mathcal{E} \cap (R_{i,j} \setminus T_i)| = \left\lbrace \begin{array}{ll}
\delta - 1 + e_{i,1} - 1 & \textrm{ if } j = 1, \\
\delta - 1 + e_{i,j} & \textrm{ if } j > 1,
\end{array} \right. $$
where $ e_{i,j} \geq 0 $ and $ e_{i,1} \geq 1 $, for all $ j \in [N] $ (in a way, we ``assign'' the element in $ T_i \subseteq \mathcal{E} $ to $ \mathcal{E} \cap R_{i,1} $ and no other set $ \mathcal{E} \cap R_{i,j} $). Note that, as we modified $ \mathcal{E} $, the values of $ e_i $ and $ e_{i,j} $, for $ i \in [g] $ and $ j \in [N] $, also changed.

In all cases, we have that 
$$ e_i = \sum_{j=1}^N e_{i,j}, $$
for all $ i \in [g] $. Next, since $ \mathcal{E} $ is correctable by a code of length $ n $ and dimension $ k $, and $ n-k = h + gN(\delta - 1) $, then 
$$ |\mathcal{E}| \leq h + gN(\delta- 1). $$
Since $ |\mathcal{E} \cap R_i| = N(\delta - 1) + e_i $, for all $ i \in [g] $, then
$$ \sum_{i=1}^g e_i \leq h. $$
Now, since $ |\mathcal{E} \cap (R_{i,j} \setminus T_i)| \geq \delta - 1 $, we may define $ \mathcal{E}_2 \subseteq \mathcal{E} $ by choosing any $ \delta - 1 $ elements in $ \mathcal{E} \cap (R_{i,j} \setminus T_i) $, for all $ i \in [g] $ and all $ j \in [N] $. Setting $ \mathcal{E}_1 = \mathcal{E} \setminus \mathcal{E}_2 $, we deduce that
$$ |\mathcal{E}_1| = |\mathcal{E}| - |\mathcal{E}_2| = \sum_{i=1}^g e_i \leq h. $$
It is obvious that $ \mathcal{E}_2 $ is locally correctable since $ \mathcal{E}_2 \cap T_i = \varnothing $ and $ |\mathcal{E}_2 \cap (R_{i,j} \setminus T_i)| = \delta - 1 $, for all $ i \in [g] $ and all $ j \in [N] $. Therefore, $ \mathcal{E} $ is simple by Definition \ref{def correctable patterns}. 

Finally, the original erasure pattern must be simple too, since every time we modified it, if the next pattern was simple, then so was the previous one, hence we are done. 
\end{proof}

In the following counterexample, we show that, for $ t=2 $, there exist LRCs with availability that can correct erasure patterns that are not simple as in Definition \ref{def correctable patterns} (i.e., the second part of Theorem \ref{thm:correctable} does not hold for $ t=2 $). 

\begin{example}
Set $ t = \delta - 1 = r = k = N = 2 $, $ g=1 $, $ h=0 $, $ T_1 = \{ 1,2 \} $, $ R_{1,1} = \{ 1,2,3,4 \} $ and $ R_{1,2} = \{ 1,2,5,6 \} $ (thus $ n = 6 $ and $ R_1 = [n] = \{ 1,2,3,4,5,6 \} $). Consider the linear code $ \mathcal{C} \subseteq \mathbb{F}_4^6 $ with generator matrix
$$ G = \left( \begin{array}{cccccccc}
1 & 0 & \alpha & \alpha + 1 & \alpha & \alpha+1 \\
0 & 1 & \alpha+1 & \alpha & \alpha+1 & \alpha
\end{array} \right), $$
where $ \mathbb{F}_4 = \{ 0,1,\alpha,\alpha+1 \} $ and $ \alpha^2 = \alpha+1 $. Then $ \mathcal{C} $ is an LRC of type $ (r,\delta,N,t) $ with $ N $-availability. Now, it can also correct the erasure pattern $ \mathcal{E} = \{ 1,2,3,6 \} $, since the following matrix is invertible 
$$ \left( \begin{array}{cc}
\alpha+1 & \alpha \\
\alpha & \alpha+1
\end{array} \right). $$
However, this pattern is not locally correctable, hence it is not simple according to Definition \ref{def correctable patterns} (due to the fact that $ h=0 $).
\end{example}

\begin{table*} [!t] \tiny
\centering
\begin{tabularx}{\textwidth}{>{\hsize=0.4\hsize}X||>{\hsize=0.4\hsize}X|X|>{\hsize=0.6\hsize}X|>{\hsize=0.6\hsize}X}
\hline
&&\\[-0.8em]
Constructions & Explicit code & Upper bound on field size $ q^m $ & Restrictions & Code it generalizes \\[0.3em]
\hline\hline
&&&&\\[-0.8em]
Construction \ref{construction 1} & Corollary \ref{cor mr-lrc construction gen matrix lrs} & $ \mathcal{O}(\max \{ g+1, r+\delta-1 \})^{t+N(r-t)} $ & $ t \leq \min \{ \delta - 1, r \} $ & $ \begin{array}{c}
 \textrm{\cite[Const. 1]{universal-lrc},} \\
 \textrm{\cite[Const. B]{cai-disjoint}}
\end{array} $ \\[0.3em]
\hline 
&&&&\\[-0.8em]
Construction \ref{construction 2} & Corollary \ref{cor mr-lrc construction parity matrix lrs} & $ \mathcal{O}(\max \{ g+1, r+\delta-1 \})^{hN} $ & $ \begin{array}{c}
t \leq \min \{ \delta - 1, r \}, \\
h \leq \min \{ r, \\
g(t + N(r-t)) \}
\end{array} $ & $ \begin{array}{c}
 \textrm{\cite[Sec. III]{cai-field},} \\
 \textrm{\cite[Sec. III-A]{gopi-field}}
\end{array} $ \\[0.3em]
\hline 
&&&&\\[-0.8em]
Construction \ref{construction 3} & Corollary \ref{cor mr-lrc construction parity matrix lrs 2} & $ \mathcal{O}\left( \frac{n}{g} - 1 \right)^{gN(\delta-1)+h} $ & $ \begin{array}{c}
t \leq \min \{ \delta - 1, r \}, \\
h \leq g(t + N(r-t))
\end{array} $  & \cite[Sec. V-A]{gopalan-MR}  \\[0.3em]
\hline 
\hline 
&&&&\\[-0.8em]
Lower bounds &  & Lower bound on field size $ q^m $ & Restrictions & Bound it generalizes \\[0.3em]
\hline 
\hline
&&&&\\[-0.8em]
Corollary \ref{cor lower bounds field h <= g} &  & $ \Omega_{h,\delta,N} \left( gt \cdot r^{\min \{ N(\delta-1),N \lfloor \frac{h-2}{N} \rfloor \}} \right) $ & $ h \leq g $ & \cite[Cor. 1]{gopi}  \\[0.3em]
\hline 
\end{tabularx}
\caption{Table summarizing the field sizes of Constructions \ref{construction 1}, \ref{construction 2} and \ref{construction 3}, together with the lower bound on field sizes from Corollary \ref{cor lower bounds field h <= g}. }
\label{table comparisons}
\end{table*}

In the following three sections, we will provide three explicit constructions of LRCs with availability as in Definition \ref{def LRC availability} that can correct any simple erasure pattern as in Definition \ref{def correctable patterns}, which in particular means that they are MR-LRCs with availability when $ t=1 $ by Theorem \ref{thm:correctable}. We also give a general lower bound on their field sizes (valid for any $ t $). We collect such results in Table \ref{table comparisons}, emphasizing the field sizes of our constructions. As a quick guide, the field size $ q $ is generally used for local correction (and is generally very small and not studied), whereas the larger field size $ q^m $ is used for global correction, where $ m $ is the degree of the field extension. Our objective is to make the field size $ q^m $ as small as possible. 
Each of the three constructions attains smaller field sizes $ q^m $ than the other two for some parameter regime, since the exponent (which is the dominant term) depends on a different parameter in each construction. Notice however that, in realistic scenarios, $ r $ grows with the code length, whereas $ \delta $, $ h $, $ g $ and $ N $ are constant. Therefore, Constructions \ref{construction 2} and \ref{construction 3} have generally smaller field sizes than Construction \ref{construction 1}.

\section{Maximum Sum-Rank Distance (MSRD) Codes} \label{appendix}

In this section, we revisit some basic facts regarding MSRD codes, which will be used in the constructions in the following sections. For more details, we refer the reader to \cite{fnt}. The sum-rank metric \cite{spacetime-kumar, multishot} is defined as follows.

\begin{definition} [\cite{spacetime-kumar, multishot}]
Given $ \mathbf{c} = (\mathbf{c}_1, \mathbf{c}_2, \ldots, \mathbf{c}_g) \in \mathbb{F}_{q^m}^{gr} $, where $ \mathbf{c}_i = (c_{i,1}, c_{i,2}, \ldots, c_{i,r}) \in \mathbb{F}_{q^m}^r $, for $ i \in [g] $, we define its sum-rank weight over $ \mathbb{F}_q $ for the length partition $ (g,r) $ as
$$ {\rm wt}_{SR}(\mathbf{c}) = \sum_{i=1}^g \dim_{\mathbb{F}_q} (\langle c_{i,1}, c_{i,2}, \ldots, c_{i,r} \rangle_{\mathbb{F}_q}), $$
where $ \dim_{\mathbb{F}_q} $ and $ \langle \cdot \rangle_{\mathbb{F}_q} $ denote dimension and linear span over $ \mathbb{F}_q $, respectively. Given $ \mathbf{c}, \mathbf{d} \in \mathbb{F}_{q^m}^{gr} $, we define their sum-rank distance (over $ \mathbb{F}_q $ for the length partition $ (g,r) $) as
$$ {\rm d}_{SR}(\mathbf{c},\mathbf{d}) = {\rm wt}_{SR}(\mathbf{c} - \mathbf{d}). $$
Finally, given a code $ \mathcal{C} \subseteq \mathbb{F}_{q^m}^{gr} $, we define its minimum sum-rank distance as
$$ {\rm d}_{SR}(\mathcal{C}) = \min \{ {\rm d}_{SR}(\mathbf{c},\mathbf{d}) : \mathbf{c},\mathbf{d} \in \mathcal{C}, \mathbf{c} \neq \mathbf{d} \}. $$
\end{definition}

As in the case of the Hamming metric, a Singleton bound holds for the sum-rank metric \cite[Cor. 2]{universal-lrc}.

\begin{proposition} [\cite{universal-lrc}]
For a code $ \mathcal{C} \subseteq \mathbb{F}_{q^m}^n $, $ n = gr $, if $ k = \log_{q^m} |\mathcal{C}| $, then
\begin{equation}
{\rm d}_{SR}(\mathcal{C}) \leq n - k + 1.
\label{eq singleton}
\end{equation}
\end{proposition}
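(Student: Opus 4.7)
The plan is to reduce the sum-rank Singleton bound to the classical Hamming-metric Singleton bound via a weight comparison. The key observation I would establish first is that for any vector $\mathbf{c} = (\mathbf{c}_1, \ldots, \mathbf{c}_g) \in \mathbb{F}_{q^m}^{gr}$, the sum-rank weight never exceeds the Hamming weight, since on each block $\dim_{\mathbb{F}_q} \langle c_{i,1}, \ldots, c_{i,r} \rangle_{\mathbb{F}_q}$ is bounded above by the number of nonzero entries among $c_{i,1}, \ldots, c_{i,r}$ (zero generators cannot enlarge the span). Summing over $i \in [g]$ and taking the minimum over nonzero differences in $\mathcal{C}$ yields $ {\rm d}_{SR}(\mathcal{C}) \leq {\rm d}_H(\mathcal{C})$, and the classical Singleton bound ${\rm d}_H(\mathcal{C}) \leq n - k + 1$ then closes the argument.

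Alternatively, I would give a direct self-contained proof by puncturing, mimicking the standard Hamming argument. Fix an arbitrary subset $P \subseteq [n]$ with $|P| = n - k + 1$ and consider the projection map $\pi \colon \mathcal{C} \to \mathbb{F}_{q^m}^{n - |P|} = \mathbb{F}_{q^m}^{k-1}$ onto the coordinates outside $P$. Since $|\mathcal{C}| = (q^m)^k > (q^m)^{k-1}$, the map $\pi$ cannot be injective, so there exist distinct $\mathbf{c}, \mathbf{d} \in \mathcal{C}$ with $\mathbf{c} - \mathbf{d}$ supported inside $P$. Applying the same blockwise bound as above to $\mathbf{c} - \mathbf{d}$ gives ${\rm d}_{SR}(\mathbf{c}, \mathbf{d}) = {\rm wt}_{SR}(\mathbf{c} - \mathbf{d}) \leq |P| = n - k + 1$, and this forces ${\rm d}_{SR}(\mathcal{C}) \leq n - k + 1$.

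I do not anticipate any genuine obstacle: the proof is essentially a one-line inequality plus the Hamming Singleton bound, the only conceptual content being the tautological comparison $\dim_{\mathbb{F}_q}\langle c_{i,1}, \ldots, c_{i,r}\rangle_{\mathbb{F}_q} \leq |\{ j : c_{i,j} \neq 0 \}|$ on each block. For this reason I would expect the authors to state the bound with either a very brief proof or a citation to \cite{universal-lrc}, without further elaboration.
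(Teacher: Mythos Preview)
Your proof is correct, and your expectation is spot on: the paper does not prove this proposition at all but simply states it with a citation to \cite{universal-lrc}. There is nothing to compare against; your argument (either the weight-comparison reduction to the Hamming Singleton bound or the direct puncturing version) is the standard one and would serve perfectly well as the omitted proof.
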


\begin{definition} [\cite{universal-lrc}] \label{def msrd}
We say that the code $ \mathcal{C} \subseteq \mathbb{F}_{q^m}^{gr} $ is maximum sum-rank distance (MSRD) if it attains the bound (\ref{eq singleton}).
\end{definition}

\begin{remark} \label{remark condition m geq r for msrd}
By \cite[Cor. 3]{universal-lrc}, if an MSRD code exists in $ \mathbb{F}_{q^m}^{gr} $, then $ m \geq r $.
\end{remark}

For our purposes, we will need the following characterization of MSRD codes from \cite[Cor. 2]{universal-lrc}.

\begin{lemma} [\cite{universal-lrc}] \label{lemma msrd charact mds diag}
A code $ \mathcal{C} \subseteq \mathbb{F}_{q^m}^{gr} $ is MSRD (over $ \mathbb{F}_q $ for the length partition $ (g,r) $) if, and only if, for all invertible matrices $ A_1, A_2, \ldots, A_g \in \mathbb{F}_q^{r \times r} $, the code $ \mathcal{C} {\rm diag}(A_1, A_2, \ldots, A_g) $ $ \subseteq \mathbb{F}_{q^m}^{gr} $ is MDS.
\end{lemma}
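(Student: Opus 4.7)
The plan is to prove both directions by leveraging the fact that right-multiplication by a block-diagonal matrix $ {\rm diag}(A_1,\ldots,A_g) $ with invertible $ A_i \in \mathbb{F}_q^{r\times r} $ preserves the sum-rank weight: each $ A_i $ induces an $ \mathbb{F}_q $-linear automorphism of the span of the entries of the $ i $th block of any codeword. As a preliminary I would record the pointwise inequality $ {\rm wt}_H(\mathbf{c}) \geq {\rm wt}_{SR}(\mathbf{c}) $ for every $ \mathbf{c}\in\mathbb{F}_{q^m}^{gr} $, which follows because the $ \mathbb{F}_q $-dimension of the span of the $ r $ entries of each block cannot exceed the number of its nonzero entries.

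For the $ (\Rightarrow) $ direction, I would fix invertible $ A_1,\ldots,A_g $ and set $ \mathcal{C}' = \mathcal{C}\,{\rm diag}(A_1,\ldots,A_g) $. Since this block-diagonal matrix is invertible over $ \mathbb{F}_q $, hence over $ \mathbb{F}_{q^m} $, one has $ \dim_{\mathbb{F}_{q^m}}\mathcal{C}' = k $. For any nonzero $ \mathbf{c}' = \mathbf{c}\,{\rm diag}(A_1,\ldots,A_g) $ with $ \mathbf{c}\in\mathcal{C} $, sum-rank invariance yields
$$ {\rm wt}_H(\mathbf{c}') \;\geq\; {\rm wt}_{SR}(\mathbf{c}') \;=\; {\rm wt}_{SR}(\mathbf{c}) \;\geq\; {\rm d}_{SR}(\mathcal{C}) \;=\; n-k+1. $$
Thus $ {\rm d}_H(\mathcal{C}') \geq n-k+1 $, and the classical Singleton bound forces equality, so $ \mathcal{C}' $ is MDS.

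For the $ (\Leftarrow) $ direction, I would pick a nonzero $ \mathbf{c}\in\mathcal{C} $ attaining $ {\rm wt}_{SR}(\mathbf{c}) = {\rm d}_{SR}(\mathcal{C}) $ and construct specific $ A_i $ that collapse each block's Hamming weight to exactly its $ \mathbb{F}_q $-rank. Concretely, if block $ \mathbf{c}_i $ has $ \mathbb{F}_q $-rank $ r_i $, fix an $ \mathbb{F}_q $-basis $ v_1^{(i)},\ldots,v_{r_i}^{(i)} $ of its span and write $ \mathbf{c}_i = (v_1^{(i)},\ldots,v_{r_i}^{(i)}) M_i $ with $ M_i\in\mathbb{F}_q^{r_i\times r} $ of full row rank $ r_i $; then there exists invertible $ A_i\in\mathbb{F}_q^{r\times r} $ such that $ M_i A_i = (I_{r_i} \mid 0) $, making $ \mathbf{c}_i A_i $ have Hamming weight exactly $ r_i $. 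Summing over blocks gives $ {\rm wt}_H(\mathbf{c}\,{\rm diag}(A_1,\ldots,A_g)) = {\rm wt}_{SR}(\mathbf{c}) $. By hypothesis, $ \mathcal{C}\,{\rm diag}(A_1,\ldots,A_g) $ is MDS of dimension $ k $, so this Hamming weight is at least $ n-k+1 $, whence $ {\rm d}_{SR}(\mathcal{C}) \geq n-k+1 $; combined with the sum-rank Singleton bound (\ref{eq singleton}), $ \mathcal{C} $ is MSRD.

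The only content-bearing ingredient is the block-wise construction of the $ A_i $ in the converse, namely completing a full-row-rank matrix $ M_i\in\mathbb{F}_q^{r_i\times r} $ to systematic form by $ \mathbb{F}_q $-column operations; this is elementary linear algebra, so I do not anticipate any genuine obstacle.
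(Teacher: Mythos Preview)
Your proof is correct. The paper does not supply its own proof of this lemma; it is simply quoted from \cite{universal-lrc} as a known characterization. Your argument is the standard one: sum-rank weight is invariant under block-diagonal right-multiplication by invertible $\mathbb{F}_q$-matrices, Hamming weight dominates sum-rank weight pointwise, and for the converse one reduces each block to echelon form so that Hamming weight equals sum-rank weight on a minimum-weight codeword. Both directions are handled cleanly and there is no gap.
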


An explicit construction of MSRD codes with small field sizes is given by linearized Reed--Solomon codes \cite[Def. 31]{linearizedRS}. See also \cite[Ch. 2, Def. 2.14]{fnt}.

\begin{definition} [\cite{linearizedRS}] \label{def lrs codes}
Let $ q > g $ be a power of a prime and let $ m \geq r $. Let $ a_1,a_2, \ldots, a_g \in \mathbb{F}_{q^m}^* $ be such that $ N_{\mathbb{F}_{q^m}/\mathbb{F}_q}(a_i) \neq N_{\mathbb{F}_{q^m}/\mathbb{F}_q}(a_j) $ if $ i \neq j $ (where $ N_{\mathbb{F}_{q^m}/\mathbb{F}_q} $ denotes the norm of $ \mathbb{F}_{q^m} $ over $ \mathbb{F}_q $ \cite[Ch. 2, Sec. 3]{lidl}). Let $ \beta_1, \beta_2, \ldots, \beta_r \in \mathbb{F}_{q^m} $ be $ \mathbb{F}_q $-linearly independent.

Set $ \mathbf{a} = (a_1,a_2, \ldots,a_g) $ and $ \boldsymbol\beta = (\beta_1, \beta_2, \ldots, \beta_m) $, and set $ n = gr $. For $ k \in [gr] $, we define the $ k $-dimensional linearized Reed--Solomon code in $ \mathbb{F}_{q^m}^{gr} $ over $ (\mathbf{a}, \boldsymbol\beta) $ as that with generator matrix
$$ G = (G_1|G_2| \ldots |G_g) \in \mathbb{F}_{q^m}^{k \times (gr)}, $$
where
$$ G_i = \left( \begin{array}{cccc}
\beta_1 & \beta_2 & \ldots & \beta_r \\
 \beta_1^q a_i & \beta_2^q a_i & \ldots & \beta_r^q a_i \\
 \beta_1^{q^2} a_i^{q+1} & \beta_2^{q^2} a_i^{q+1} & \ldots & \beta_r^{q^2} a_i^{q+1} \\
\vdots & \vdots & \ddots & \vdots \\
 \beta_1^{q^{k-1}} a_i^{\frac{q^{k-1}-1}{q-1}} & \beta_2^{q^{k-1}} a_i^{\frac{q^{k-1}-1}{q-1}} & \ldots & \beta_r a_i^{\frac{q^{k-1}-1}{q-1}}
\end{array} \right) \in \mathbb{F}_{q^m}^{k \times r}, $$
for $ i \in [g] $. 
\end{definition}

We notice that, in this construction, we may choose a different $ r $-size set $ \{ \beta_1, \beta_2, \ldots, \beta_r \} $ $ \subseteq \mathbb{F}_{q^m} $ of $ \mathbb{F}_q $-linearly independent elements for different block indices $ i \in [g] $. Moreover, such sets may have different sizes $ r_i $ for different indices $ i \in [g] $ (in the definition of the sum-rank metric, we would partition the code length as $ r_1+r_2+ \cdots +r_g $). However, we omit this case for simplicity in the notation. See also \cite{fnt}.

The bounds $ q > g $ and $ m \geq r $ are tight, meaning that we can choose $ q = \mathcal{O}(g) $ and $ m = r $. In that case, the field sizes of linearized Reed--Solomon codes are of the form $ q^m = \mathcal{O}(g)^r $. The following result is \cite[Th. 4]{linearizedRS}. See also \cite[Ch. 2, Th. 2.20]{fnt}.

\begin{theorem} [\cite{linearizedRS}] \label{th lrs codes are msrd}
With notation and hypotheses as in Definition \ref{def lrs codes}, a linearized Reed--Solomon code is MSRD over $ \mathbb{F}_q $ for the length partition $ (g,r) $.
\end{theorem}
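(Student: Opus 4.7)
The plan is to use the characterization in Lemma~\ref{lemma msrd charact mds diag}: I would show that for every choice of invertible matrices $A_1, \ldots, A_g \in \mathbb{F}_q^{r \times r}$, the code $\mathcal{C} \cdot \mathrm{diag}(A_1, \ldots, A_g) \subseteq \mathbb{F}_{q^m}^{gr}$ is MDS, which is equivalent to the MSRD property of $\mathcal{C}$.

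First, I would exploit the fact that each $A_i$ has entries in $\mathbb{F}_q$, so the Frobenius map $x \mapsto x^q$ fixes them pointwise. Setting $\gamma_{i,j} = \sum_{\ell=1}^{r} A_i[\ell, j]\, \beta_\ell \in \mathbb{F}_{q^m}$, a direct computation shows
\[ (G_i A_i)_{s+1,\,j} \;=\; a_i^{(q^s-1)/(q-1)}\, \gamma_{i,j}^{q^s}, \qquad 0 \leq s \leq k-1,\; 1 \leq j \leq r. \]
Because $A_i$ is invertible over $\mathbb{F}_q$ and $\beta_1, \ldots, \beta_r$ are $\mathbb{F}_q$-linearly independent, the transformed elements $\gamma_{i,1}, \ldots, \gamma_{i,r}$ are again $\mathbb{F}_q$-linearly independent. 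Hence the transformed generator matrix has exactly the same structural form as a linearized Reed--Solomon generator, just possibly with a different $\mathbb{F}_q$-linearly independent set in each block. Thus the statement reduces to proving MDS-ness for an arbitrary linearized Reed--Solomon generator matrix.

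The MDS property amounts to showing that every $k \times k$ submatrix of the transformed generator is nonsingular. Such a submatrix is specified by a selection of $k$ columns indexed by pairs $(i_\ell, j_\ell) \in [g] \times [r]$ (with possible repetition of the $i_\ell$'s), and its $(s+1, \ell)$-entry equals $a_{i_\ell}^{(q^s-1)/(q-1)}\, \gamma_{i_\ell, j_\ell}^{q^s}$. I would interpret this as a generalized Moore-type matrix whose columns record evaluations of the skew monomials $1, x, \ldots, x^{k-1}$ in the skew polynomial ring $\mathbb{F}_{q^m}[x; \sigma]$ (with $\sigma$ the $q$-Frobenius) at evaluation points split into $g$ groups, one per index $i$, where all points in the $i$-th group share the common ``parameter'' $a_i$.

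The crux is to show that such a Moore determinant is nonzero. The norm hypothesis $N_{\mathbb{F}_{q^m}/\mathbb{F}_q}(a_i) \neq N_{\mathbb{F}_{q^m}/\mathbb{F}_q}(a_j)$ for $i \neq j$ is exactly what guarantees that the $a_i$'s represent distinct $\sigma$-conjugacy classes of $\mathbb{F}_{q^m}$; combined with the $\mathbb{F}_q$-linear independence within each block, this is the standard sufficient condition for the chosen evaluation points to be $P$-independent in the sense of Lam--Leroy. The generalized Moore determinant for $P$-independent evaluation points is then nonzero, delivering the MDS property of every block-scalar transformation and hence the MSRD property of $\mathcal{C}$. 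The main technical obstacle is setting up this skew-polynomial evaluation framework rigorously and invoking the ``product rule'' that combines inter-block distinctness (via distinct norms) and intra-block $\mathbb{F}_q$-linear independence into a single $P$-independence statement over $\mathbb{F}_{q^m}[x;\sigma]$.
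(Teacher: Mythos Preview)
The paper does not actually prove Theorem~\ref{th lrs codes are msrd}: it is stated in the Appendix as a result imported verbatim from \cite[Th.~4]{linearizedRS}, with no argument given. So there is no ``paper's own proof'' to compare against.

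That said, your sketch is correct and in fact tracks the original proof in \cite{linearizedRS} quite closely. The reduction via Lemma~\ref{lemma msrd charact mds diag} is exactly the right move, and your observation that right-multiplication by the $\mathbb{F}_q$-matrices $A_i$ preserves the linearized Reed--Solomon shape (with new $\mathbb{F}_q$-independent $\gamma_{i,j}$'s in each block) is the key structural point. The remaining step---nonvanishing of the generalized Moore determinant---is precisely the content of the $P$-independence criterion: distinct norms place the $a_i$ in distinct $\sigma$-conjugacy classes, and $\mathbb{F}_q$-linear independence within each block then guarantees that the full list of $k$ evaluation points is $P$-independent, so no nonzero skew polynomial of degree $<k$ can vanish on all of them. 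This is exactly the Lam--Leroy framework that \cite{linearizedRS} builds on. Your proposal is a faithful outline of that argument; the only thing missing is the actual verification of the product rule / conjugacy-class count, which is the technical heart of \cite{linearizedRS} and not something one can reproduce in a paragraph.
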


\section{A Construction Using Generator Matrices} \label{sec const generator}

In this section, we will present a first construction, which is based on generator matrices of maximum sum-rank distance (MSRD) codes (see Section \ref{appendix}). The construction extends both \cite[Const. 1]{universal-lrc} and \cite[Const. B]{cai-disjoint}. For $ t=1 $ and choosing linearized Reed--Solomon codes \cite{linearizedRS} as MSRD codes, this construction essentially becomes \cite[Const. B]{cai-disjoint}. However, the maximal recovery property of such codes was not proven in that work, whereas we will prove it. 
%

\begin{construction} \label{construction 1}
Fix positive integers $ r $, $ \delta $, $ t $, $ g $ and $ N $, with $ t \leq \min \{ \delta - 1 , r \} $. Define the code lengths
$$ n_o = g(t + N(r-t)) \quad \textrm{and} \quad n = g(t + N(r+\delta - 1-t)). $$
Next choose a dimension $ k \in [n_o] $ and:
\begin{enumerate}
\item
\textit{Outer code}: A $ k $-dimensional $ \mathbb{F}_{q^m} $-linear code $ \mathcal{C}_{out} \subseteq \mathbb{F}_{q^m}^{n_o} $ that is MSRD over $ \mathbb{F}_q $ for the length partition $ (g, t+N(r-t)) $. See Definition \ref{def msrd}.
\item
\textit{Local codes}: An $ (r+\delta-1, r) $ MDS code $ \mathcal{C}_{loc} \subseteq \mathbb{F}_q^{r + \delta - 1} $, linear over $ \mathbb{F}_q $. Let $ A \in \mathbb{F}_q^{r \times (r + \delta - 1)} $ be a generator matrix of $ \mathcal{C}_{loc} $ of the form 
$$ A = \left( \begin{array}{c|c}
I_t & B \\
\hline
0_{(r-t) \times t} & C
\end{array} \right) , $$ 
where $ I_t $ denotes the $ t \times t $ identity matrix, $ 0_{(r-t) \times t} $ is the $ (r-t)\times t $ identically zero matrix, $ B \in \mathbb{F}_q^{t \times (r+\delta -1 - t)} $ and $ C \in \mathbb{F}_q^{(r-t) \times (r+\delta -1 - t)} $.
\end{enumerate}
We finally define a \textit{global code} as follows:
\begin{enumerate}
\item[3.]
\textit{Global code}: Let $ \mathcal{C}_{glob} \subseteq \mathbb{F}_{q^m}^n $ be given by
$$ \mathcal{C}_{glob} = \mathcal{C}_{out} {\rm diag}(\underbrace{D,D, \ldots, D}_{g \textrm{ times}}), $$
where
$$ D = \left( \begin{array}{c|cccc}
I_t & B & B & \ldots & B \\
\hline
0 & C & 0 & \ldots & 0 \\
0 & 0 & C & \ldots & 0 \\
\vdots & \vdots & \vdots & \ddots & \vdots \\
0 & 0 & 0 & \ldots & C
\end{array} \right) \in \mathbb{F}_q^{(t + N(r-t)) \times (t + N(r+\delta - 1 - t))}. $$
\end{enumerate}
\end{construction}

In terms of generator matrices, the previous construction can be easily described as follows. If $ G_{out} \in \mathbb{F}_{q^m}^{k \times n_o} $ is a generator matrix of $ \mathcal{C}_{out} $, then a generator matrix of $ \mathcal{C}_{glob} $ is simply given by
$$ G_{glob} = G_{out} {\rm diag}(\underbrace{D,D, \ldots, D}_{g \textrm{ times}}) \in \mathbb{F}_{q^m}^{k \times n} . $$

We now show that this construction always yields a maximally recoverable code as desired.

\begin{theorem}
The global code $ \mathcal{C}_{glob} \subseteq \mathbb{F}_{q^m}^n $ from Construction \ref{construction 1} is an LRC of type $ (r,\delta, N,t) $ with $ N $-availability that can correct any simple erasure pattern as in Definition \ref{def correctable patterns}. In particular, it is maximally recoverable if $ t=1 $ by Theorem \ref{thm:correctable}. Furthermore, if $ k \leq gt $, it has information $ N $-availability.
\end{theorem}
\begin{proof}
We define the sets $ T_i $ and $ R_{i,j} $ as in (\ref{eq sets T_i R_ij}). Clearly, $ \mathcal{C}_{glob} $ is an LRC of type $ (r,\delta,N,t) $ with $ N $-availability, and if $ k \leq gt $, it has information $ N $-availability. Hence we only need to prove that it can correct any simple erasure pattern as in Definition \ref{def correctable patterns}.

Consider an erasure pattern $ \mathcal{E} \subseteq [n] $ that is locally correctable, as in Definition \ref{def locally correctable erasures}, and of maximum size, i.e., $ |\mathcal{E}| = gN(\delta - 1) $. By Remark \ref{remark simple patterns if MDS}, we only need to show that $ \mathcal{C}_{glob}|_{\overline{\mathcal{E}}} $ is MDS, where $ \overline{\mathcal{E}} = [n] \setminus \mathcal{E} $. 

Without loss of generality (by reordering the sets $ R_{i,1}, \ldots, R_{i,N} $), we may assume that $ |\mathcal{E} \cap R_{i,1}| = \delta - 1 $ and $ |(R_{i,j} \setminus T_i) \cap \mathcal{E} | = \delta - 1 $, for all $ j \in [2,N] $ and all $ i \in [g] $. Therefore the code $ \mathcal{C}|_{\overline{\mathcal{E}}} $ is of the form
$$ \mathcal{C}|_{\overline{\mathcal{E}}} = \mathcal{C}_{out} {\rm diag}(D|_{\overline{\mathcal{E}} \cap R_1}, D|_{\overline{\mathcal{E}} \cap R_2}, \ldots, D|_{\overline{\mathcal{E}} \cap R_g}) \subseteq \mathbb{F}_{q^m}^{n_o}, $$
where we identify $ \overline{\mathcal{E}} \cap R_i $ with $ [t + N(r-t)] $ in the obvious way (they are of the same size). By Lemma \ref{lemma msrd charact mds diag}, we only need to show that $ D|_{\overline{\mathcal{E}} \cap R_i} \in \mathbb{F}_q^{(t + N(r-t)) \times (t + N(r-t))} $ is invertible, for all $ i \in [g] $.

For each $ i \in [g] $, we have that
$$ D|_{\overline{\mathcal{E}} \cap R_i} = \left( \begin{array}{c|cccc}
\multirow{2}{*}{$A|_{\overline{\mathcal{E}} \cap R_{i,1}}$} & B|_{\overline{\mathcal{E}} \cap (R_{i,2}\setminus T_i)} & B|_{\overline{\mathcal{E}} \cap (R_{i,3}\setminus T_i)} & \ldots & B|_{\overline{\mathcal{E}} \cap (R_{i,N}\setminus T_i)} \\
 & 0 & 0 & \ldots & 0 \\
 \hline
 0 & C|_{\overline{\mathcal{E}} \cap (R_{i,2}\setminus T_i)} & 0 & \ldots & 0 \\
 0 & 0 & C|_{\overline{\mathcal{E}} \cap (R_{i,3}\setminus T_i)} & \ldots & 0 \\
 \vdots & \vdots & \vdots & \ddots & \vdots \\
 0 & 0 & 0 & \ldots & C|_{\overline{\mathcal{E}} \cap (R_{i,N}\setminus T_i)}
\end{array} \right). $$ 
Now, since $ A \in \mathbb{F}_q^{r \times (r+\delta -1)} $ generates an $ r $-dimensional MDS code and $ |\mathcal{E} \cap R_{i,1}| =\delta -1 $, then $ A|_{\overline{\mathcal{E}} \cap R_{i,1}} \in \mathbb{F}_q^{r \times r} $ is invertible. For the same reason, the matrix
$$ A|_{[t] \cup (\overline{\mathcal{E}} \cap (R_{i,j}\setminus T_i))} = \left( \begin{array}{c|c}
I_t & B|_{\overline{\mathcal{E}} \cap (R_{i,j}\setminus T_i)} \\
\hline
0_{(r-t) \times t} & C|_{\overline{\mathcal{E}} \cap (R_{i,j}\setminus T_i)}
\end{array} \right) \in \mathbb{F}_q^{r \times r} $$
is also invertible, which in turn implies that $ C|_{\overline{\mathcal{E}} \cap (R_{i,j}\setminus T_i)} \in \mathbb{F}_q^{(r-t) \times (r-t)} $ is invertible, for $ j \in [2,N] $. Therefore, due to its upper block-triangular shape, we conclude that the matrix $ D|_{\overline{\mathcal{E}} \cap R_i} \in \mathbb{F}_q^{(t + N(r-t)) \times (t + N(r-t))} $ is invertible, for all $ i \in [g] $, and we are done.
\end{proof} 

If we choose as outer code a linearized Reed--Solomon code \cite[Def. 31]{linearizedRS} (see Definition \ref{def lrs codes}), then we have the restrictions $ q > g $ and $ m \geq t + N(r-t) $. Furthermore, if we choose as the local code a doubly extended Reed--Solomon code \cite[Ch. 11, Sec. 5]{macwilliamsbook}, then we also have the restriction $ q \geq r+\delta -1 $. Since both of these code families are explicit, we conclude the following.

\begin{corollary} \label{cor mr-lrc construction gen matrix lrs}
For any positive integers $ r $, $ \delta $, $ t $, $ g $, $ k $ and $ N $, with $ t \leq \min \{ \delta - 1 , r \} $ and $ k \leq g(t+N(r-t)) $, there exists an explicit LRC of type $ (r,\delta, N,t) $ with $ N $-availability that can correct any simple erasure pattern as in Definition \ref{def correctable patterns} (in particular, it is maximally recoverable if $ t=1 $ by Theorem \ref{thm:correctable}), of dimension $ k $ (thus number of global parities $ h = g(t+N(r-t))-k $) and with field sizes of the form
$$ q^m = \mathcal{O}(\max \{ g+1, r+\delta - 1 \})^{t + N(r-t)}. $$
Furthermore, if $ k \leq gt $, then the code may be chosen to have information $ N $-availability.
\end{corollary}

A generator matrix of such a code can be explicitly given using linearized Reed--Solomon codes (see Definition \ref{def lrs codes}). Set $ q $ as the smallest prime power at least $ \max \{ g+1, r+\delta - 1 \} $ (thus $ q = \mathcal{O}(\max \{ g+1, r+\delta - 1 \}) $) and $ m = t + N(r-t) $. Let $ \beta_1, \beta_2, \ldots, \beta_m \in \mathbb{F}_{q^m} $ be a basis of $ \mathbb{F}_{q^m} $ over $ \mathbb{F}_q $. Define the matrix $ D \in \mathbb{F}_q^{(t+N(r-t)) \times (t+N(r+\delta -1-t))} $ as in Construction \ref{construction 1}, where $ A \in \mathbb{F}_q^{r \times (r+\delta - 1)} $ generates a doubly extended Reed--Solomon code \cite[Ch. 11, Sec. 5]{macwilliamsbook}. Define
$$ (\gamma_1, \gamma_2, \ldots, \gamma_{t+N(r+\delta-1-t)}) = (\beta_1, \beta_2, \ldots, \beta_m) D \in \mathbb{F}_{q^m}^{t+N(r+\delta-1-t)}. $$
Next, let $ a_1,a_2, \ldots, a_g \in \mathbb{F}_{q^m}^* $ be such that $ N_{\mathbb{F}_{q^m}/\mathbb{F}_q}(a_i) \neq N_{\mathbb{F}_{q^m}/\mathbb{F}_q}(a_j) $ if $ i \neq j $ (where $ N_{\mathbb{F}_{q^m}/\mathbb{F}_q} $ denotes the norm of $ \mathbb{F}_{q^m} $ over $ \mathbb{F}_q $ \cite[Ch. 2, Sec. 3]{lidl}). Finally, let $ \mathcal{C}_{out} \subseteq \mathbb{F}_{q^m}^{gr} $ be a $ k $-dimensional linearized Reed--Solomon code with generator matrix $ G_{out} \in \mathbb{F}_{q^m}^{k \times (gr)} $ over $ (\mathbf{a},\boldsymbol\beta) $, as in Definition \ref{def lrs codes}. Following Construction \ref{construction 1}, a generator matrix of the MR-LRC from Corollary \ref{cor mr-lrc construction gen matrix lrs} is
$$ G_{glob} = (G_1|G_2| \ldots |G_g) \in \mathbb{F}_{q^m}^{k \times n}, $$
where
$$ G_i = \left( \begin{array}{cccc}
\gamma_1 & \gamma_2 & \ldots & \gamma_{t+N(r+\delta-1-t)} \\
 \gamma_1^q a_i & \gamma_2^q a_i & \ldots & \gamma_{t+N(r+\delta-1-t)}^q a_i \\
 \gamma_1^{q^2} a_i^{q+1} & \gamma_2^{q^2} a_i^{q+1} & \ldots & \gamma_{t+N(r+\delta-1-t)}^{q^2} a_i^{q+1} \\
\vdots & \vdots & \ddots & \vdots \\
 \gamma_1^{q^{k-1}} a_i^{\frac{q^{k-1}-1}{q-1}} & \gamma_2^{q^{k-1}} a_i^{\frac{q^{k-1}-1}{q-1}} & \ldots & \gamma_{t+N(r+\delta-1-t)}^{q^{k-1}} a_i^{\frac{q^{k-1}-1}{q-1}}
\end{array} \right) \in \mathbb{F}_{q^m}^{k \times (t + N(r+\delta-1-t))}, $$
for $ i \in [g] $.

\section{A First Construction Using Parity-Check Matrices} \label{sec const parity-check}

In this section, we present a second construction, which is based on parity-check matrices of MSRD codes (see Section \ref{appendix}). The construction extends the constructions in \cite[Sec. III]{cai-field} and \cite[Sec. III-A]{gopi-field}, which are essentially equivalent but appeared independently. When $ N = 1 $, this construction achieves the optimal field-size order in Corollary \ref{cor lower bounds field h <= g} when $ n = \Theta(r^2) = \Theta(g^2) $, and $ h $ and $ \delta $ considered as constants (see \cite{cai-field, gopi-field}).

As in Section \ref{sec const generator}, we will present a general construction based on an outer MSRD code and local MDS codes. Later, we will particularize this construction using an outer linearized Reed--Solomon code and local Reed--Solomon codes, which guarantees small field sizes.

\begin{construction} \label{construction 2}
Fix positive integers $ r $, $ \delta $, $ t $, $ g $ and $ N $, with $ t \leq \min \{ \delta - 1 , r \} $ and $ h \leq r $. Define the code lengths
$$ n_o = ghN \quad \textrm{and} \quad n = g(t + N(r+\delta - 1-t)). $$
Next choose:
\begin{enumerate}
\item
\textit{Outer code}: An $ h $-dimensional $ \mathbb{F}_{q^m} $-linear code $ \mathcal{C}_{out} \subseteq \mathbb{F}_{q^m}^{n_o} $ that is MSRD over $ \mathbb{F}_q $ for the length partition $ (g, hN) $. See Definition \ref{def msrd}.
\item
\textit{Local codes}: An $ (r+\delta-1, h+\delta-1) $ MDS code $ \mathcal{C}_{loc} \subseteq \mathbb{F}_q^{r + \delta - 1} $, linear over $ \mathbb{F}_q $. Let $ A $ be a generator matrix of $ \mathcal{C}_{loc} $ of the form 
$$ A = \left( \begin{array}{cc}
I_t & B \\
0_{(\delta - 1 -t) \times t} & C \\
0_{h \times t} & D
\end{array} \right) \in \mathbb{F}_q^{(h+\delta-1) \times (r+\delta-1)}, $$ 
where $ B \in \mathbb{F}_q^{t \times (r+\delta -1 - t)} $, $ C \in \mathbb{F}_q^{(\delta - 1 -t) \times (r+\delta -1 - t)} $ and $ D \in \mathbb{F}_q^{h \times (r+\delta -1 - t)} $, and such that the linear code generated by
$$ A^\prime = \left( \begin{array}{cc}
I_t & B \\
0_{(\delta - 1 -t) \times t} & C 
\end{array} \right) \in \mathbb{F}_q^{(\delta-1) \times (r+\delta-1)} $$
is also MDS. Such a generator matrix $ A $ exists for any MDS code (let $ A $ be systematic in the first $ h+\delta-1 $ coordinates and apply \cite[Ch. 11, Th. 8]{macwilliamsbook}).
\end{enumerate}
We finally define a \textit{global code} as follows:
\begin{enumerate}
\item[3.]
\textit{Global code}: Let $ \mathcal{C}_{glob} \subseteq \mathbb{F}_{q^m}^n $ be the $ \mathbb{F}_{q^m} $-linear code with parity-check matrix
$$ H = \left( \begin{array}{cccc}
P & 0 & \ldots & 0 \\
0 & P & \ldots & 0 \\
\vdots & \vdots & \ddots & \vdots \\
0 & 0 & \ldots & P \\
G_1Q & G_2Q & \ldots & G_gQ
\end{array} \right) \in \mathbb{F}_{q^m}^{(gN(\delta - 1) + h) \times n}, $$
where $ (G_1|G_2|\ldots|G_g) \in \mathbb{F}_{q^m}^{h \times (ghN)} $ is a generator matrix of $ \mathcal{C}_{out} $,
$$ P = \left( \begin{array}{ccccc}
I_t & B & 0 & \ldots & 0 \\
0 & C & 0 & \ldots & 0 \\
I_t & 0 & B & \ldots & 0 \\
0 & 0 & C & \ldots & 0 \\
\vdots & \vdots & \vdots & \ddots & \vdots \\
I_t & 0 & 0 & \ldots & B \\
0 & 0 & 0 & \ldots & C
\end{array} \right) \in \mathbb{F}_q^{(N(\delta - 1)) \times (t + N(r+\delta - 1 - t))} $$
and
$$ Q = \left( \begin{array}{ccccc}
0_{h \times t} & D & 0 & \ldots & 0 \\
0_{h \times t} & 0 & D & \ldots & 0 \\
\vdots & \vdots & \vdots & \ddots & \vdots \\
0_{h \times t} & 0 & 0 & \ldots & D 
\end{array} \right) \in \mathbb{F}_q^{(hN) \times (t + N(r+\delta - 1 - t))}. $$
\end{enumerate}

\end{construction}

We next show that this construction yields a maximally recoverable code as desired. 

\begin{theorem} \label{th const 2 is MR-LRC}
If $ h \leq r $, then the global code $ \mathcal{C}_{glob} \subseteq \mathbb{F}_{q^m}^n $ from Construction \ref{construction 1} is an LRC of type $ (r,\delta, N,t) $ with $ N $-availability that can correct any simple erasure pattern as in Definition \ref{def correctable patterns}. In particular, it is maximally recoverable if $ t=1 $ by Theorem \ref{thm:correctable}. Furthermore, if $ k \leq gt $, then the code may be chosen to have information $ N $-availability. 
\end{theorem}
\begin{proof}
We define the sets $ T_i $ and $ R_{i,j} $ as in (\ref{eq sets T_i R_ij}). Due to the shape of $ H $ and $ P $ in Construction \ref{construction 2}, we observe that a codeword $ \mathbf{c} \in \mathcal{C}_{glob} $ restricted to the coordinates in $ R_{i,j} $ satisfies the equations
$$ \mathbf{c}|_{R_{i,j}} \left( \begin{array}{cc}
I_t & B \\
0 & C
\end{array} \right)^\intercal = \mathbf{0}, $$
i.e., $ \mathbf{c}|_{R_{i,j}} (A^\prime)^\intercal = \mathbf{0} $. This means that $ \mathbf{c}|_{R_{i,j}} $ belongs to the $ (r+\delta-1,r) $ MDS code with parity-check matrix $ A^\prime $, for $ j \in [N] $ and $ i \in [g] $. This proves that $ \mathcal{C}_{glob} $ is an LRC of type $ (r,\delta,N,t) $ with $ N $-availability, and if $ k \leq gt $, it has information $ N $-availability. 

Now we prove that it can correct any simple erasure pattern $ \mathcal{E} \subseteq [n] $ as in Definition \ref{def correctable patterns}. That is, $ \mathcal{E} = \mathcal{E}_1 \cup \mathcal{E}_2 $, where the union is disjoint, $ \mathcal{E}_1 $ is locally correctable (Definition \ref{def locally correctable erasures}) and $ |\mathcal{E}_2| \leq h $. Without loss of generality, we may assume that $ \mathcal{E} $ is of maximum size and that $ \mathcal{E}_1 $ can be locally repaired starting from the first recovery sets, i.e., 
\begin{enumerate}
\item
$ | \mathcal{E}_1 \cap R_{i,1} | = \delta - 1 $, for all $ i \in [g] $,
\item
$ | \mathcal{E}_1 \cap (R_{i,j} \setminus T_i) | = \delta - 1 $, for all $ j \in [2,N] $ and all $ i \in [g] $, and
\item
$ | \mathcal{E}_2 | = h $.
\end{enumerate}
In order to be able to correct such erasures, we need the matrix $ H|_\mathcal{E} \in \mathbb{F}_{q^m}^{(gN(\delta - 1) + h) \times |\mathcal{E}|} $ to be invertible (notice that it is square, i.e., $ |\mathcal{E}| = gN(\delta - 1) + h $).

Consider sets $ \overline{\Delta}_{i,j} \subseteq \mathcal{E}_1 $ and $ \Delta_{i,j} \subseteq \mathcal{E}_2 $ (thus disjoint) such that $ |\overline{\Delta}_{i,j}| = \delta - 1 $ and
$$ \Delta_{i,j} \cup \overline{\Delta}_{i,j} = \left\lbrace \begin{array}{ll}
\mathcal{E} \cap R_{i,1} & \textrm{ if } j = 1, \\
\mathcal{E} \cap (R_{i,j} \setminus T_i) & \textrm{ otherwise,}
\end{array} \right. $$
for $ j \in [N] $ and $ i \in [g] $. We also set, for $ i \in [g] $,
$$ \Delta_i = \bigcup_{j=1}^N \Delta_{i,j} \quad \textrm{and} \quad \overline{\Delta}_i = \bigcup_{j=1}^N \overline{\Delta}_{i,j}. $$
By the MDS property, the restricted matrices
\begin{equation}
\left( \begin{array}{cc}
I_t & B \\
0 & C
\end{array} \right)_{\overline{\Delta}_{i,1}}, \left( \begin{array}{c}
B \\
C
\end{array} \right)_{\overline{\Delta}_{i,j}} \in \mathbb{F}_q^{(\delta-1)\times(\delta-1)}
\label{eq local matrices for parity proof}
\end{equation}
are invertible, for $ j \in [2,N] $ and $ i \in [g] $. Therefore, $ P_{\overline{\Delta}_i} \in \mathbb{F}_q^{N(\delta-1) \times N(\delta-1)} $ is also invertible due to its triangular shape.

Using the same reasoning, we have that
$$ \left( \begin{array}{cc}
I_t & B \\
0 & C \\
0 & D
\end{array} \right)_{\overline{\Delta}_{i,1} \cup \Delta_{i,1}}, \left( \begin{array}{c}
B \\
C \\
D
\end{array} \right)_{\overline{\Delta}_{i,j} \cup \Delta_{i,j}} \in \mathbb{F}_q^{(h+\delta-1)\times|\overline{\Delta}_{i,j} \cup \Delta_{i,j}|} \textrm{ if } j \geq 2, $$
have full column rank $ |\overline{\Delta}_{i,j} \cup \Delta_{i,j}| \leq h + \delta - 1 $, for $ j \in [N] $ and $ i \in [g] $. Therefore, 
$$ \left( \begin{array}{c}
P_{\overline{\Delta}_i \cup \Delta_i} \\
Q_{\overline{\Delta}_i \cup \Delta_i}
\end{array} \right) \in \mathbb{F}_q^{(N(h + \delta-1)) \times |\overline{\Delta}_i \cup \Delta_i |} $$ 
is of full column rank $ |\overline{\Delta}_i \cup \Delta_i | \leq h + N(\delta-1) $, since reordering its rows we obtain the matrix
$$ \left( \begin{array}{cccc}
 \left( \begin{array}{cc}
 I_t & B \\
 0 & C \\
 0 & D
 \end{array} \right)_{\overline{\Delta}_{i,1} \cup \Delta_{i,1}} & 0 & \ldots & 0 \\
 \left( \begin{array}{cc}
 I_t & 0 \\
 0 & 0 \\
 0 & 0
 \end{array} \right)_{\overline{\Delta}_{i,1} \cup \Delta_{i,1}} &  \left( \begin{array}{c}
 B \\
 C \\
 D
 \end{array} \right)_{\overline{\Delta}_{i,2} \cup \Delta_{i,2}} & \ldots & 0 \\
 \vdots & \vdots & \ddots & \vdots \\
 \left( \begin{array}{cc}
 I_t & 0 \\
 0 & 0 \\
 0 & 0
 \end{array} \right)_{\overline{\Delta}_{i,1} \cup \Delta_{i,1}} & 0 & \ldots & \left( \begin{array}{c}
 B \\
 C \\
 D
 \end{array} \right)_{\overline{\Delta}_{i,N} \cup \Delta_{i,N}}
\end{array} \right), $$
which is of full column rank $ |\overline{\Delta}_i \cup \Delta_i | \leq h + N(\delta-1) $ due to its triangular shape as before.

Now, since the matrices in (\ref{eq local matrices for parity proof}) are invertible, then using elementary $ \mathbb{F}_q $-linear column operations, we may construct invertible matrices $ E_{i,j} \in \mathbb{F}_q^{|\overline{\Delta}_{i,j} \cup \Delta_{i,j}| \times |\overline{\Delta}_{i,j} \cup \Delta_{i,j}|} $ such that
$$ \left( \begin{array}{cc}
I_t & B \\
0 & C \\
0 & D
\end{array} \right)_{\overline{\Delta}_{i,1} \cup \Delta_{i,1}} \cdot E_{i,1} = \left( \begin{array}{cc}
 \left( \begin{array}{cc}
I_t & B \\
0 & C
\end{array} \right)_{\overline{\Delta}_{i,1}} & 0 \\
\left( \begin{array}{cc}
0 & D
\end{array} \right) _{\overline{\Delta}_{i,1}} & W_{i,1}  
\end{array} \right) \textrm{ and} $$
$$ \left( \begin{array}{c}
 B \\
C \\
D
\end{array} \right)_{\overline{\Delta}_{i,j} \cup \Delta_{i,j}} \cdot E_{i,j} = \left( \begin{array}{cc}
 \left( \begin{array}{c}
B \\
C
\end{array} \right)_{\overline{\Delta}_{i,j}} & 0 \\
D_{\overline{\Delta}_{i,j}} & W_{i,j}  
\end{array} \right), \textrm{ for } j \geq 2, $$
for some $ W_{i,j} \in \mathbb{F}_q^{h \times |\Delta_{i,j}|} $ for $ j \in [N] $ and $ i \in [g] $. More concretely, using a suitable ordering of $ \overline{\Delta}_{i,j} \cup \Delta_{i,j} $, the matrices $ E_{i,j} $ are given by
$$ E_{i,1} = \left( \begin{array}{cc}
I_{\delta-1} & - \left( \begin{array}{cc}
I_t & B \\
0 & C
\end{array} \right)_{\overline{\Delta}_{i,1}}^{-1} \left( \begin{array}{cc}
I_t & B \\
0 & C
\end{array} \right)_{\Delta_{i,1}} \\
0 & I_{|\Delta_{i,1}|}
\end{array} \right) , $$
and similarly for $ E_{i,j} $ (simply remove the columns corresponding to $ I_t $), for $ j \in [N] $ and $ i \in [g] $.

Hence, using a suitable ordering of $ \overline{\Delta}_i \cup \Delta_i $, we have
$$ \left( \begin{array}{cc}
P_{\overline{\Delta}_i \cup \Delta_i} \\
Q_{\overline{\Delta}_i \cup \Delta_i}
\end{array} \right) \cdot \left( \begin{array}{cccc}
E_{i,1} & 0 & \ldots & 0 \\
0 & E_{i,2} & \ldots & 0 \\
\vdots & \vdots & \ddots & \vdots \\
0 & 0 & \ldots & E_{i,N}
\end{array} \right) = $$
\begin{equation}
\left( \begin{array}{cc|cc|c|cc}
\left( \begin{array}{cc}
I_t & B \\
0 & C
\end{array} \right)_{\overline{\Delta}_{i,1}} & 0 & 0 & 0 & \ldots & 0 & 0 \\
\hline
\left( \begin{array}{cc}
I_t & 0 \\
0 & 0
\end{array} \right)_{\overline{\Delta}_{i,1}} & F_i & \left( \begin{array}{c}
B \\
C
\end{array} \right)_{\overline{\Delta}_{i,2}} & 0 & \ldots & 0 & 0 \\
\hline
\vdots & \vdots & \vdots & \vdots & \ddots & \vdots & \vdots \\
\hline
\left( \begin{array}{cc}
I_t & 0 \\
0 & 0
\end{array} \right)_{\overline{\Delta}_{i,1}} & F_i & 0 & 0 & \ldots & \left( \begin{array}{c}
B \\
C
\end{array} \right)_{\overline{\Delta}_{i,N}} & 0 \\
\hline
 \left( \begin{array}{cc}
 0 & D
\end{array} \right)_{\overline{\Delta}_{i,1}} & W_{i,1} & 0 & 0 & \ldots & 0 & 0 \\
0 & 0 & D_{\overline{\Delta}_{i,2}} & W_{i,2} & \ldots & 0 & 0 \\
\vdots & \vdots & \vdots & \vdots & \ddots & \vdots & \vdots \\
0 & 0 & 0 & 0 & \ldots & D_{\overline{\Delta}_{i,N}} & W_{i,N}
\end{array} \right),
\label{eq big matrix}
\end{equation}
for certain (possibly nonzero) matrices 
\begin{equation}
F_i = \left( \begin{array}{cc}
I_t & 0 \\
0 & 0 
\end{array} \right)_{\Delta_{i,1}} - \left( \begin{array}{cc}
I_t & 0 \\
0 & 0 
\end{array} \right)_{\overline{\Delta}_{i,1}} \left( \begin{array}{cc}
I_t & B \\
0 & C 
\end{array} \right)_{\overline{\Delta}_{i,1}}^{-1} \left( \begin{array}{cc}
I_t & B \\
0 & C 
\end{array} \right)_{\Delta_{i,1}} \in \mathbb{F}_q^{(\delta - 1) \times |\Delta_{i,1}|} ,
\label{eq def F_{i,j}}
\end{equation}
for $ i \in [g] $. Since $ \left( \begin{array}{cc}
P_{\overline{\Delta}_i \cup \Delta_i} \\
Q_{\overline{\Delta}_i \cup \Delta_i}
\end{array} \right) $ has full column rank $ |\overline{\Delta}_i \cup \Delta_i | $ (which is its number of columns) and multiplying by $ {\rm diag}(E_{i,1},E_{i,2},$ $\ldots, E_{i,N}) $ on the right is an invertible linear transformation of such columns, then the matrix in (\ref{eq big matrix}) is also of full column rank $ |\overline{\Delta}_i \cup \Delta_i | $. 

Next, since the matrices $ \left( \begin{array}{c}
B \\
C
\end{array} \right)_{\overline{\Delta}_{i,j}} $ are invertible, for $ j \in [2,N] $, by further elementary $ \mathbb{F}_q $-linear column operations, we may construct invertible matrices $ M_i \in \mathbb{F}_q^{|\overline{\Delta}_i \cup \Delta_i| \times |\overline{\Delta}_i \cup \Delta_i|} $ such that
$$ \left( \begin{array}{cc}
P_{\overline{\Delta}_i \cup \Delta_i} \\
Q_{\overline{\Delta}_i \cup \Delta_i}
\end{array} \right) \cdot \left( \begin{array}{cccc}
E_{i,1} & 0 & \ldots & 0 \\
0 & E_{i,2} & \ldots & 0 \\
\vdots & \vdots & \ddots & \vdots \\
0 & 0 & \ldots & E_{i,N}
\end{array} \right) \cdot M_i = $$
\begin{equation}
\left( \begin{array}{cc|cc|c|cc}
\left( \begin{array}{cc}
I_t & B \\
0 & C
\end{array} \right)_{\overline{\Delta}_{i,1}} & 0 & 0 & 0 & \ldots & 0 & 0 \\
\hline
\left( \begin{array}{cc}
I_t & 0 \\
0 & 0
\end{array} \right)_{\overline{\Delta}_{i,1}} & 0 & \left( \begin{array}{c}
B \\
C
\end{array} \right)_{\overline{\Delta}_{i,2}} & 0 & \ldots & 0 & 0 \\
\hline
\vdots & \vdots & \vdots & \vdots & \ddots & \vdots & \vdots \\
\hline
\left( \begin{array}{cc}
I_t & 0 \\
0 & 0
\end{array} \right)_{\overline{\Delta}_{i,1}} & 0 & 0 & 0 & \ldots & \left( \begin{array}{c}
B \\
C
\end{array} \right)_{\overline{\Delta}_{i,N}} & 0 \\
\hline
 \left( \begin{array}{cc}
 0 & D
\end{array} \right)_{\overline{\Delta}_{i,1}} & W_{i,1} & 0 & 0 & \ldots & 0 & 0 \\
0 & V_{i,2} & D_{\overline{\Delta}_{i,2}} & W_{i,2} & \ldots & 0 & 0 \\
\vdots & \vdots & \vdots & \vdots & \ddots & \vdots & \vdots \\
0 & V_{i,N} & 0 & 0 & \ldots & D_{\overline{\Delta}_{i,N}} & W_{i,N}
\end{array} \right),
\label{eq big matrix 2}
\end{equation}
for some matrices $ V_{i,2}, \ldots, V_{i,N} \in \mathbb{F}_q^{h \times |\Delta_{i,1}|} $, for $ i \in [g] $. Since the matrices in (\ref{eq local matrices for parity proof}) are invertible and those in (\ref{eq big matrix 2}) have full column rank $ |\overline{\Delta}_i \cup \Delta_i| $, we conclude that
$$ W_i = \left( \begin{array}{ccccc}
 W_{i,1} & 0 & 0 & \ldots & 0 \\
 V_{i,2} & W_{i,2} & 0 & \ldots & 0 \\
 V_{i,3} & 0 & W_{i,3} & \ldots & 0 \\
 \vdots & \vdots & \vdots & \ddots & \vdots \\
 V_{i,N} & 0 & 0 & \ldots & W_{i,N}
\end{array} \right) \in \mathbb{F}_q^{(hN) \times |\Delta_i|} $$
is of full column rank $ |\Delta_i| $, for $ i \in [g] $.

Finally, using a suitable ordering of $ \overline{\Delta}_i \cup \Delta_i $ and defining the invertible matrices $ N_i = {\rm diag}(E_{i,1},E_{i,2}, \ldots, E_{i,N}) M_i \in \mathbb{F}_q^{|\overline{\Delta}_i \cup \Delta_i| \times |\overline{\Delta}_i \cup \Delta_i|} $, we have 
$$ H|_\mathcal{E} \cdot {\rm diag}(N_1, N_2, \ldots, N_g) = $$
$$ \left( \begin{array}{ccccccc}
P_{\overline{\Delta}_1} & 0 & 0 & 0 & \ldots & 0 & 0 \\
0 & 0 & P_{\overline{\Delta}_2} & 0 & \ldots & 0 & 0 \\
\vdots & \vdots & \vdots & \vdots & \ddots & \vdots & \vdots \\
0 & 0 & 0 & 0 &\ldots & P_{\overline{\Delta}_g} & 0 \\
G_1 Q_{\overline{\Delta}_1} & G_1W_1 & G_2 Q_{\overline{\Delta}_2} & G_2W_2 & \ldots & G_g Q_{\overline{\Delta}_g} & G_gW_g
\end{array} \right). $$
Since the matrices $ P_{\overline{\Delta}_i} $ are invertible, we conclude that $ H|_\mathcal{E} $ is invertible if, and only if, 
$$ (G_1W_1| G_2W_2| \ldots |G_gW_g) = (G_1|G_2|\ldots|G_g) {\rm diag}(W_1,W_2,\ldots,W_g) \in \mathbb{F}_{q^m}^{h \times h} $$
is invertible, which holds by Lemma \ref{lemma msrd charact mds diag} since $ \mathcal{C}_{out} $ is MSRD for the length partition $ (g,hN) $ and $ W_i \in \mathbb{F}_q^{(hN) \times |\Delta_i|} $ is of full column rank $ |\Delta_i| $, for $ i \in [g] $, and we are done.
\end{proof}

For an explicit construction, we may proceed as in Section \ref{sec const generator}, and obtain the following.

\begin{corollary} \label{cor mr-lrc construction parity matrix lrs}
For any positive integers $ r $, $ \delta $, $ t $, $ g $, $ h $ and $ N $, with $ t \leq \min\{ \delta-1,r \} $ and $ h \leq \min \{ r, g(t+N(r-t)) \} $, there exists an explicit LRC of type $ (r,\delta, N,t) $ with $ N $-availability that can correct any simple erasure pattern as in Definition \ref{def correctable patterns} (in particular, it is maximally recoverable if $ t=1 $ by Theorem \ref{thm:correctable}), of dimension $ k = g(t+N(r-t))-h $, and with field sizes of the form
$$ q^m = \mathcal{O}(\max \{ g+1, r+\delta - 1 \})^{hN}. $$
Furthermore, if $ k \leq gt $, then the code may be chosen to have information $ N $-availability.
\end{corollary}

In a similar way as in Section \ref{sec const parity-check}, a parity-check matrix of such a code can be explicitly given using linearized Reed--Solomon codes (see Definition \ref{def lrs codes}). Set $ q $ as the smallest prime power at least $ \max \{ g+1, r+\delta - 1 \} $ (thus $ q = \mathcal{O}(\max \{ g+1, r+\delta - 1 \}) $) and $ m = hN $. Let $ a_1,a_2, \ldots, a_g \in \mathbb{F}_{q^m}^* $ be such that $ N_{\mathbb{F}_{q^m}/\mathbb{F}_q}(a_i) \neq N_{\mathbb{F}_{q^m}/\mathbb{F}_q}(a_j) $ if $ i \neq j $ (where $ N_{\mathbb{F}_{q^m}/\mathbb{F}_q} $ denotes the norm of $ \mathbb{F}_{q^m} $ over $ \mathbb{F}_q $ \cite[Ch. 2, Sec. 3]{lidl}). Let $ \beta_1, \beta_2, \ldots, \beta_{hN} \in \mathbb{F}_{q^m} $ be a basis of $ \mathbb{F}_{q^m} $ over $ \mathbb{F}_q $. Consider a generator matrix of an $ h $-dimensional linearized Reed--Solomon code with length partition $ (g,hN) $,
$$ (G_1|G_2| \ldots |G_g) \in \mathbb{F}_{q^{hN}}^{h \times (ghN)}, $$
where
$$ G_i = \left( \begin{array}{cccc}
\beta_1 & \beta_2 & \ldots & \beta_{hN} \\
 \beta_1^q a_i & \beta_2^q a_i & \ldots & \beta_{hN}^q a_i \\
 \beta_1^{q^2} a_i^{q+1} & \beta_2^{q^2} a_i^{q+1} & \ldots & \beta_{hN}^{q^2} a_i^{q+1} \\
\vdots & \vdots & \ddots & \vdots \\
 \beta_1^{q^{h-1}} a_i^{\frac{q^{h-1}-1}{q-1}} & \beta_2^{q^{h-1}} a_i^{\frac{q^{h-1}-1}{q-1}} & \ldots & \beta_{hN}^{q^{h-1}} a_i^{\frac{q^{h-1}-1}{q-1}}
\end{array} \right) \in \mathbb{F}_{q^{hN}}^{h \times (hN)}, $$
for $ i \in [g] $, as in Definition \ref{def lrs codes}. Then a parity-check matrix of the code in Corollary \ref{cor mr-lrc construction parity matrix lrs} is as in Construction \ref{construction 2}, Item 3, where $ A \in \mathbb{F}_q^{(h+\delta -1) \times (r+\delta -1)} $ generates a doubly extended Reed--Solomon code \cite[Ch. 11, Sec. 5]{macwilliamsbook} of dimension $ h+\delta-1 $ and length $ r+\delta -1 $.

\section{A Second Construction Using Parity-Check Matrices} \label{sec const parity-check second} 

In this section, we present a third construction, the second one based on parity-check matrices of MSRD codes (see Section \ref{appendix}). This construction is inspired by the maximally recoverable code extensions for general topologies given in \cite[Sec. V-A]{gopalan-MR}. Even though the extension in \cite[Sec. V-A]{gopalan-MR} is given for general topologies, it is only valid for binary local codes. We extend it to any base finite field and then particularize it to the setting of maximal recovery with locality and availability given in this manuscript, providing LRCs with availability that can correct simple erasure patterns as in Definition \ref{def correctable patterns} (thus being maximally recoverable when $ t=1 $ by Theorem \ref{thm:correctable}). Furthermore, we may further reduce the field sizes obtained in \cite[Sec. V-A]{gopalan-MR} by using MSRD codes instead of MRD codes. For small $ \delta $, the construction in this section gives the lowest field size among all three (see Table \ref{table comparisons}), and the exponent in the field size is the closest to the lower bound from Corollary \ref{cor lower bounds field h <= g} as a function of $ h $.

\begin{construction} \label{construction 3}
Fix positive integers $ r $, $ \delta $, $ t $, $ g $ and $ N $, with $ t \leq \min \{ \delta - 1 , r \} $. Define the code length $ n = g(t + N(r+\delta - 1-t)) $. As in Construction \ref{construction 2}, fix an $ (r+\delta-1, \delta-1) $ MDS code $ \mathcal{C}_{loc} \subseteq \mathbb{F}_q^{r + \delta - 1} $, linear over $ \mathbb{F}_q $, with a generator matrix of the form
$$ A = \left( \begin{array}{cc}
I_t & B \\
0_{(\delta - 1 -t) \times t} & C 
\end{array} \right) \in \mathbb{F}_q^{(\delta-1) \times (r+\delta-1)}, $$ 
where $ B \in \mathbb{F}_q^{t \times (r+\delta -1 - t)} $ and $ C \in \mathbb{F}_q^{(\delta - 1 -t) \times (r+\delta -1 - t)} $. 
We may choose such a generator matrix $ A $ for any MDS code for the same reasons as in Construction \ref{construction 2}. Set now
$$ P_0 = \left( \begin{array}{ccccc}
I_t & B & 0 & \ldots & 0 \\
0 & C & 0 & \ldots & 0 \\
I_t & 0 & B & \ldots & 0 \\
0 & 0 & C & \ldots & 0 \\
\vdots & \vdots & \vdots & \ddots & \vdots \\
I_t & 0 & 0 & \ldots & B \\
0 & 0 & 0 & \ldots & C
\end{array} \right) \in \mathbb{F}_q^{(N(\delta-1)) \times (t + N(r+\delta-1-t))}, $$
which represents the $ N $ local codes in $ R_i $ (the same matrix $ P_0 $ for every $ i \in [g] $). Define now
$$ P = \left( \begin{array}{cccc}
P_0 & 0 & \ldots & 0 \\
0 & P_0 & \ldots & 0 \\
\vdots & \vdots & \ddots & \vdots \\
0 & 0 & \ldots & P_0
\end{array} \right) \in \mathbb{F}_q^{(gN(\delta-1)) \times n} , $$
which represents all $ gN $ local codes (they are all the same local code, repeated $ gN $ times). Next, consider a field extension degree $ m $ and an $ \ell $-wise $ \mathbb{F}_q $-linearly independent set $ \mathcal{S} = \{ \beta_1, \beta_2, \ldots, \beta_{n/g} \} \subseteq \mathbb{F}_{q^m} $ of size $ n/g $ (i.e., any subset of $ \mathcal{S} $ of size $ \ell $ or less is $ \mathbb{F}_q $-linearly independent). Let also $ a_1,a_2, \ldots, a_g \in \mathbb{F}_{q^m}^* $ be such that $ N_{\mathbb{F}_{q^m}/\mathbb{F}_q}(a_i) \neq N_{\mathbb{F}_{q^m}/\mathbb{F}_q}(a_j) $ if $ i \neq j $ (where $ N_{\mathbb{F}_{q^m}/\mathbb{F}_q} $ denotes the norm of $ \mathbb{F}_{q^m} $ over $ \mathbb{F}_q $ \cite[Ch. 2, Sec. 3]{lidl}). Finally, define the matrix
$$ P^+ = \left( \begin{array}{c}
P \\
Q
\end{array} \right) = \left( \begin{array}{cccc}
P_0 & 0 & \ldots & 0 \\
0 & P_0 & \ldots & 0 \\
\vdots & \vdots & \ddots & \vdots \\
0 & 0 & \ldots & P_0 \\
Q_1 & Q_2 & \ldots & Q_g
\end{array} \right) \in \mathbb{F}_{q^m}^{(gN(\delta-1) + h) \times n} , $$
where $ Q = (Q_1|Q_2| \ldots |Q_g) \in \mathbb{F}_{q^m}^{h \times n} $ and, for each $ i \in [g] $, we define
$$ Q_i = \left( \begin{array}{cccc}
\beta_1 & \beta_2 & \ldots & \beta_{n/g} \\
 \beta_1^q a_i & \beta_2^q a_i & \ldots & \beta_{n/g}^q a_i \\
 \beta_1^{q^2} a_i^{q+1} & \beta_2^{q^2} a_i^{q+1} & \ldots & \beta_{n/g}^{q^2} a_i^{q+1} \\
\vdots & \vdots & \ddots & \vdots \\
 \beta_1^{q^{h-1}} a_i^{\frac{q^{h-1}-1}{q-1}} & \beta_2^{q^{h-1}} a_i^{\frac{q^{h-1}-1}{q-1}} & \ldots & \beta_{n/g}^{q^{h-1}} a_i^{\frac{q^{h-1}-1}{q-1}}
\end{array} \right) \in \mathbb{F}_{q^m}^{h \times (t + N(r+\delta-1-t))} . $$
We define the \textit{global code} $ \mathcal{C}_{glob} \subseteq \mathbb{F}_{q^m}^n $ as the $ \mathbb{F}_{q^m} $-linear code with parity-check matrix $ P^+ $.
\end{construction}

The following result extends \cite[Th. 33]{gopalan-MR}, which will later allow us to turn the code $ \mathcal{C}_{glob} $ from Construction \ref{construction 3} into an MR-LRC with availability.

\begin{theorem} \label{th correctable pattern for const 3}
The code $ \mathcal{C}_{glob} \subseteq \mathbb{F}_{q^m}^n $ from Construction \ref{construction 3} can correct any erasure pattern $ \mathcal{E} \subseteq [n] $ such that $ |\mathcal{E}| \leq \ell $ and
$$ |\mathcal{E}| - {\rm Rk}(P|_\mathcal{E}) \leq h , $$
where $ {\rm Rk} $ denotes the rank function.
\end{theorem}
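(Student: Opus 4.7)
The plan is to show that whenever $\mathcal{E}$ satisfies the two hypotheses, the matrix $P^+|_\mathcal{E}$ has full column rank, from which correctability follows. Suppose $v \in \mathbb{F}_{q^m}^{|\mathcal{E}|}$ lies in $\ker(P^+|_\mathcal{E})$ and extend $v$ by zeros to $\tilde{v} \in \mathbb{F}_{q^m}^n$ supported in $\mathcal{E}$. Then $P\tilde{v} = 0$ and $Q\tilde{v} = 0$, and it suffices to show $\tilde{v} = 0$.

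The first step exploits that $P$ is block-diagonal with $g$ identical blocks $P_0$ defined over $\mathbb{F}_q$. Writing $\tilde{v} = (\tilde{v}^{(1)}, \ldots, \tilde{v}^{(g)})$ in block form and setting $\mathcal{E}_i = \mathcal{E} \cap R_i$ (identified with a subset of $[n/g]$), each $\tilde{v}^{(i)}$ lies in $\ker_{\mathbb{F}_{q^m}}(P_0|_{\mathcal{E}_i}) = \ker_{\mathbb{F}_q}(P_0|_{\mathcal{E}_i}) \otimes_{\mathbb{F}_q} \mathbb{F}_{q^m}$. Pick $U_i \in \mathbb{F}_q^{(n/g) \times d_i}$ whose columns extend an $\mathbb{F}_q$-basis of that block-kernel by zeros outside $\mathcal{E}_i$, where $d_i = |\mathcal{E}_i| - {\rm Rk}(P_0|_{\mathcal{E}_i})$, and factor $\tilde{v}^{(i)} = U_i \lambda^{(i)}$ with $\lambda^{(i)} \in \mathbb{F}_{q^m}^{d_i}$. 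The hypothesis becomes $\sum_{i=1}^g d_i = |\mathcal{E}| - {\rm Rk}(P|_\mathcal{E}) \leq h$.

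The equation $Q\tilde{v} = 0$ now reads $\hat{Q} \lambda = 0$, where $\lambda$ is the concatenation of the $\lambda^{(i)}$ and $\hat{Q} = (Q_1 U_1 \mid Q_2 U_2 \mid \cdots \mid Q_g U_g) \in \mathbb{F}_{q^m}^{h \times \sum_i d_i}$. It therefore suffices to show $\hat{Q}$ has full column rank. Here I would use the $\ell$-wise independence of $\mathcal{S}$: letting $T = \bigcup_i \mathcal{E}_i \subseteq [n/g]$, we have $|T| \leq |\mathcal{E}| \leq \ell$, so $\{\beta_j : j \in T\}$ is $\mathbb{F}_q$-linearly independent. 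Hence the column-restriction $Q' = (Q_1|_T \mid \cdots \mid Q_g|_T)$ is a genuine generator matrix of a linearized Reed--Solomon code with length partition $(g, |T|)$ on the points $a_1,\ldots,a_g$, which is MSRD by Definition \ref{def lrs codes}. Since each $U_i$ is supported on rows in $\mathcal{E}_i \subseteq T$, we have $\hat{Q} = Q' \cdot {\rm diag}(U_1|_T, \ldots, U_g|_T)$, with each $U_i|_T \in \mathbb{F}_q^{|T| \times d_i}$ still of full column rank $d_i$ over $\mathbb{F}_q$. Since $\sum_i d_i \leq h$, Lemma \ref{lemma msrd charact mds diag} then yields that $\hat{Q}$ has full column rank, so $\lambda = 0$, hence $\tilde{v} = 0$, as desired.

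The main obstacle is exactly this final reduction step: verifying that the $\ell$-wise independence of $\mathcal{S}$ is enough to carve out a genuine MSRD substructure on the active coordinates $T$, and that Lemma \ref{lemma msrd charact mds diag} still applies in the non-square case $\sum_i d_i \leq h$ (as opposed to $\sum_i d_i = h$). The latter can be handled by padding $\hat{Q}$ with additional $\mathbb{F}_q$-linearly-independent columns drawn from the MSRD code on $T$ until the total becomes $h$, applying the square version of the lemma, and then dropping the padding. Everything else is bookkeeping: the $\mathbb{F}_q$-definability of $\ker(P_0|_{\mathcal{E}_i})$ reduces $\tilde{v}$ to an $\mathbb{F}_q$-combinatorial object parametrized by $\lambda$, and the MSRD property of linearized Reed--Solomon codes does the rest.
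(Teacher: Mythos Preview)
Your proof is correct and follows essentially the same route as the paper's. The paper applies invertible column operations $A_i \in \mathbb{F}_q^{|\mathcal{E}_i|\times|\mathcal{E}_i|}$ to put $P_0|_{\mathcal{E}_i}$ into block form (invertible part plus zero columns) and then reads off a full-rank submatrix of $P^+|_\mathcal{E}\cdot A$; your $U_i$ are precisely the columns of those $A_i$ corresponding to the zero block, so your kernel factorization $\tilde{v}^{(i)} = U_i\lambda^{(i)}$ and your matrix $\hat{Q}$ are the dual formulation of the same step. Both arguments then invoke the $\ell$-wise independence of $\mathcal{S}$ to identify a linearized Reed--Solomon (hence MSRD) structure on the active coordinates and finish via Lemma~\ref{lemma msrd charact mds diag}.

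One cosmetic remark: your padding description (``until the total becomes $h$'') is slightly off, since the lemma requires each block to be a square invertible $|T|\times|T|$ matrix, not that the total column count be $h$. The clean fix is to extend each $U_i|_T$ to an invertible $\tilde U_i \in \mathbb{F}_q^{|T|\times|T|}$, apply the lemma to get that $Q'\cdot{\rm diag}(\tilde U_1,\ldots,\tilde U_g)$ generates an MDS code of dimension $\min\{h,g|T|\}$, and then observe that $\hat{Q}$ consists of $\sum_i d_i \leq \min\{h,g|T|\}$ of its columns. This is exactly what the paper does (it writes ``generates an MDS code of dimension $\min\{h,|\mathcal{E}|\}$''), so the fix is routine.
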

\begin{proof}
Let $ \mathcal{E} \subseteq [n] $ be such that $ |\mathcal{E}| \leq \ell $ and $ |\mathcal{E}| - {\rm Rk}(P|_\mathcal{E}) \leq h $. If we denote by
$$ \ker(P^+|_\mathcal{E}) = \left\lbrace \mathbf{c} \in \mathbb{F}_{q^m}^{|\mathcal{E}|} : (P^+|_\mathcal{E}) \mathbf{c}^\intercal = \mathbf{0} \right\rbrace \subseteq \mathbb{F}_{q^m}^{|\mathcal{E}|} $$
the right kernel of $ P^+|_\mathcal{E} \in \mathbb{F}_{q^m}^{|\mathcal{E}|} $, then we need to prove that $ \dim(\ker(P^+|_\mathcal{E})) = 0 $, or equivalently, $ P^+|_\mathcal{E} $ has maximum rank, $ {\rm Rk}(P^+|_\mathcal{E}) = |\mathcal{E}| $. This is because, in that case, there would be a unique solution to a linear system of equations with coefficient matrix $ P^+|_\mathcal{E} $, i.e., a unique way to correct the erasures. 

Fix $ i \in [g] $. Let $ \mathcal{E}_i = \mathcal{E} \cap R_i $ (with notation as in Definition \ref{def LRC availability}), which may be considered as a subset of $ [t+N(r+\delta -1-t)] $ by a translation since $ |R_i| = t+N(r+\delta -1-t) $. Let also $ e_i = {\rm Rk}(P_0|_{\mathcal{E}_i}) \leq |\mathcal{E}_i| $. After possibly reordering the rows of $ P_0|_{\mathcal{E}_i} $, we may assume that its first $ e_i $ rows are linearly independent (this does not change the proof but simplifies the notation). Then there exists an invertible matrix $ A_i \in \mathbb{F}_q^{|\mathcal{E}_i| \times |\mathcal{E}_i|} $ such that
$$ P_0|_{\mathcal{E}_i} \cdot A_i = \left( \begin{array}{c|c}
 P_i & 0_{e_i \times (|\mathcal{E}_i|-e_i)} \\
\hline
 P_i^\prime & P_i^{\prime \prime} \\
\end{array} \right) \in \mathbb{F}_q^{(N(\delta-1)) \times |\mathcal{E}_i|} , $$
for an invertible matrix $ P_i \in \mathbb{F}_q^{e_i \times e_i} $, and for other (possibly not of full rank) matrices $ P_i^\prime \in \mathbb{F}_q^{(N(\delta-1) - e_i) \times e_i} $ and $ P_i^{\prime \prime} \in \mathbb{F}_q^{(N(\delta-1) - e_i) \times (|\mathcal{E}_i|-e_i)} $.

Now if we set $ A = {\rm diag}(A_1, A_2, \ldots, A_g) \in \mathbb{F}_q^{|\mathcal{E}| \times |\mathcal{E}|} $, which is invertible, we have that
$$ P^+|_\mathcal{E} \cdot A = \left( \begin{array}{c}
P|_\mathcal{E} \\
Q|_\mathcal{E}
\end{array} \right) \cdot A = \left( \begin{array}{cc|cc|c|cc}
 P_1 & 0 & 0 & 0 & \ldots & 0 & 0 \\
 P_1^\prime & P_1^{\prime \prime} & 0 & 0 & \ldots & 0 & 0 \\
 \hline
 0 & 0 & P_2 & 0 & \ldots & 0 & 0 \\
 0 & 0 & P_2^\prime & P_2^{\prime \prime} & \ldots & 0 & 0 \\
 \hline
 \vdots & \vdots & \vdots & \vdots & \ddots & \vdots & \vdots \\
 \hline
 0 & 0 & 0 & 0 & \ldots & P_g & 0 \\
 0 & 0 & 0 & 0 & \ldots & P_g^\prime & P_g^{\prime \prime} \\
 \hline
 \multicolumn{2}{c|}{(Q_1|_{\mathcal{E}_1})A_1} & \multicolumn{2}{c|}{(Q_2|_{\mathcal{E}_2})A_2} & \ldots & \multicolumn{2}{c}{(Q_g|_{\mathcal{E}_g})A_g}
\end{array} \right) \in \mathbb{F}_{q^m}^{(gN(\delta-1)+h) \times |\mathcal{E}|} . $$
Now, since $ |\mathcal{E}_i| \leq |\mathcal{E}| \leq \ell $, for all $ i \in [g] $, and $ \mathcal{S} = \{ \beta_1, \beta_2, \ldots, \beta_{n/g} \} \subseteq \mathbb{F}_{q^m} $ is $ \ell $-wise $ \mathbb{F}_q $-linearly independent, then the matrix
$$ Q|_\mathcal{E} = ((Q_1|_{\mathcal{E}_1})|(Q_2|_{\mathcal{E}_2})|\ldots|(Q_g|_{\mathcal{E}_g})) \in \mathbb{F}_{q^m}^{h \times |\mathcal{E}|} $$
generates a linearized Reed--Solomon code (Definition \ref{def lrs codes}), which is MSRD (Definition \ref{def msrd}) by Theorem \ref{th lrs codes are msrd}. As a consequence, by Lemma \ref{lemma msrd charact mds diag}, we deduce that the matrix
$$ Q|_\mathcal{E} \cdot A = ((Q_1|_{\mathcal{E}_1})A_1|(Q_2|_{\mathcal{E}_2})A_2|\ldots|(Q_g|_{\mathcal{E}_g})A_g) \in \mathbb{F}_{q^m}^{h \times |\mathcal{E}|} $$ 
generates an MDS code of dimension $ \min\{h, |\mathcal{E}|\} $. Next, since 
$$ \sum_{i=1}^g (|\mathcal{E}_i| - e_i) = |\mathcal{E}| - {\rm Rk}(P|_\mathcal{E}) \leq h $$
by hypothesis, then any $ |\mathcal{E}| - {\rm Rk}(P|_\mathcal{E}) $ columns of $ Q|_\mathcal{E} \cdot A $ are linearly independent. Therefore, the matrix $ P^+|_\mathcal{E} \cdot A $ contains the following submatrix of full column rank $ |\mathcal{E}| $,
$$ \left( \begin{array}{cc|cc|c|cc}
 P_1 & 0 & 0 & 0 & \ldots & 0 & 0 \\
 0 & 0 & P_2 & 0 & \ldots & 0 & 0 \\
 \vdots & \vdots & \vdots & \vdots & \ddots & \vdots & \vdots \\
 0 & 0 & 0 & 0 & \ldots & P_g & 0 \\
 \hline
 \multicolumn{2}{c|}{(Q_1|_{\mathcal{E}_1})A_1} & \multicolumn{2}{c|}{(Q_2|_{\mathcal{E}_2})A_2} & \ldots & \multicolumn{2}{c}{(Q_g|_{\mathcal{E}_g})A_g}
\end{array} \right) \in \mathbb{F}_{q^m}^{({\rm Rk}(P|_\mathcal{E}) + h) \times |\mathcal{E}|} . $$
In particular, $  {\rm Rk}(P^+|_\mathcal{E} \cdot A) = |\mathcal{E}| $. Since $ A $ is invertible and $ |\mathcal{E}| $ is the number of columns of $ P^+ $, we conclude that $ {\rm Rk}(P^+|_\mathcal{E}) = {\rm Rk}(P^+|_\mathcal{E} \cdot A) = |\mathcal{E}| $, as we wanted to prove.
\end{proof} 

As a consequence, we may prove that Construction \ref{construction 3} yields an MR-LRC with availability as desired.

\begin{corollary} \label{cor mr-lrc construction parity matrix lrs 2}
For any positive integers $ r $, $ \delta $, $ t $, $ g $, $ h $ and $ N $, with $ t \leq \min \{ \delta-1,r \} $ and $ h \leq g(t+N(r-t)) $, there exists an explicit LRC of type $ (r,\delta, N,t) $ with $ N $-availability that can correct any simple erasure pattern as in Definition \ref{def correctable patterns} (in particular, it is maximally recoverable if $ t=1 $ by Theorem \ref{thm:correctable}), of dimension $ k = g(t+N(r-t))-h $, and with field sizes of the form
$$ q^m \leq \mathcal{O}\left( \frac{n}{g}-1 \right)^{gN(\delta-1) + h}, $$
where $ q = \max \{ g+1, r+\delta -1 \} $. Furthermore, if $ k \leq gt $, then the code may be chosen to have information $ N $-availability.
\end{corollary}
\begin{proof}
Let the notation, the code $ \mathcal{C}_{glob} $ and the parameters $ r $, $ \delta $, $ t $, $ g $, $ h $ and $ N $ be as in Construction \ref{construction 3}. We need to specify the matrix $ A $ (in order to specify the matrix $ P $) and the set $ \mathcal{S} = \{ \beta_1, \beta_2, \ldots, \beta_{n/g} \} $ (in order to specify the matrix $ Q $). The elements $ a_1, a_2, \ldots, a_g \in \mathbb{F}_{q^m}^* $ can be chosen arbitrarily such that $ N_{\mathbb{F}_{q^m}/\mathbb{F}_q}(a_i) \neq N_{\mathbb{F}_{q^m}/\mathbb{F}_q}(a_j) $, if $ i \neq j $, which imposes the restriction $ q \geq g+1 $.

First, choose $ A $ as the generator matrix of a doubly extended Reed--Solomon code \cite[Ch. 11, Sec. 5]{macwilliamsbook}, which imposes the restriction $ q \geq r+\delta-1 $. If we choose $ q = \mathcal{O}(\max \{ g+1, r+\delta-1 \}) $, then both restrictions on $ q $ are satisfied.

Now that $ P $ is specified, set $ \ell = gN(\delta-1) + h $ and let $ \widetilde{H} \in \mathbb{F}_{q^s}^{\ell \times (n/g)} $ be the parity-check matrix of another doubly extended Reed--Solomon code, which this time imposes the condition $ q^s \geq n/g - 1 $, i.e., $ s \geq \log_q(n/g -1) $. Set $ m = s \ell $ and let $ \alpha_1, \alpha_2, \ldots, \alpha_m \in \mathbb{F}_{q^m} $ form a basis of $ \mathbb{F}_{q^m} $ over $ \mathbb{F}_q $. If $ H \in \mathbb{F}_q^{(s\ell) \times (n/g)} $ denotes the matrix obtained by expanding every entry of $ \widetilde{H} $ column-wise over $ \mathbb{F}_q $, then $ H $ is the parity-check matrix of an $ \mathbb{F}_q $-linear code of minimum Hamming distance at least $ \ell + 1 $ and length $ n/g $. Define now
$$ (\beta_1, \beta_2, \ldots, \beta_{n/g} ) = (\alpha_1, \alpha_2, \ldots, \alpha_m) \cdot H \in \mathbb{F}_{q^m}^{n/g}. $$
Since any $ \ell $ columns of $ H $ are $ \mathbb{F}_q $-linearly independent (it is the parity-check matrix of a code of minimum hamming distance at least $ \ell+1 $), then $ \mathcal{S} = \{ \beta_1, \beta_2, \ldots, \beta_{n/g} \} $, as defined above, is $ \ell $-wise $ \mathbb{F}_q $-linearly independent. Thus the set $ \mathcal{S} $ is also specified, and so is the matrix $ Q $. Thus $ \mathcal{C}_{glob} $ in Construction \ref{construction 3} is explicit with definitions as above.

Next we prove the erasure-correction capabilities of the code $ \mathcal{C}_{glob} $. Exactly in the same way as in Theorem \ref{th const 2 is MR-LRC}, the code $ \mathcal{C}_{glob} $ is an LRC of type $ (r,\delta,N,t) $ with $ N $-availability, and if $ k \leq gt $, it has information $ N $-availability. 

Now we prove that it can correct any simple erasure pattern $ \mathcal{E} \subseteq [n] $ as in Definition \ref{def correctable patterns}. That is, $ \mathcal{E} = \mathcal{E}_1 \cup \mathcal{E}_2 $, where the union is disjoint, $ \mathcal{E}_1 $ is locally correctable (Definition \ref{def locally correctable erasures}) and $ |\mathcal{E}_2| \leq h $. We may assume without loss of generality that $ \mathcal{E}_1 $ is of maximum size, i.e., $ |\mathcal{E}_1| = gN(\delta-1) $.

Similarly to the proof of Theorem \ref{th const 2 is MR-LRC}, since $ \mathcal{E}_1 $ is locally correctable and by the shape of matrix $ P $ and the fact that $ A $ generates an MDS matrix, we have that
$$ {\rm Rk}(P|_{\mathcal{E}_1}) = gN(\delta-1). $$ 
Therefore we have
$$ |\mathcal{E}| - {\rm Rk}(P|_{\mathcal{E}}) \leq gN(\delta-1)+h - {\rm Rk}(P|_{\mathcal{E}_1}) \leq h. $$
Moreover, since $ |\mathcal{E}_1| = gN(\delta-1) $, $ |\mathcal{E}_2| \leq h $ and $ \ell = gN(\delta-1) + h $, then
$$ |\mathcal{E}| = |\mathcal{E}_1| + |\mathcal{E}_2| \leq gN(\delta-1) + h = \ell. $$
Hence, by Theorem \ref{th correctable pattern for const 3}, the erasure pattern $ \mathcal{E} $ is correctable, and we have proven that $ \mathcal{C}_{glob} $ can correct any simple erasure pattern as in Definition \ref{def correctable patterns}.

Finally, we upper bound the field size $ q^m $. If we choose $ q^s = \mathcal{O}( n/g - 1) $, then 
$$ q^m = q^{s \ell} = \mathcal{O}\left( \frac{n}{g}-1 \right)^\ell = \mathcal{O}\left( \frac{n}{g}-1 \right)^{gN(\delta-1) + h}. $$
\end{proof}

\begin{remark}
Observe that, in the case $ g = 1 $ (all local sets intersect in $ T $) and $ \delta = 2 $ (every local set can correct $ 1 $ erasure locally), then the field sizes of Constructions \ref{construction 2} and \ref{construction 3} are, respectively,
$$ (r+1)^{hN} = \mathcal{O}(r^{hN}) \quad \textrm{and} \quad \mathcal{O}(t-1+N(r-t+1))^{2N + h} = \mathcal{O}(r^{2N+h}), $$ 
considering $ N $, $ t $ and $ h $ bounded, and $ n $ and $ r $ unrestricted. In this parameter regime, the field-size exponent $ 2N+h $ is generally much smaller than $ hN $.
\end{remark}

\section{Lower Bound on the Field Size} \label{sec lower bound}

In this section, we will derive a lower bound on the field size of an MR-LRC of type $ (r,\delta, N,t) $ with $ N $-availability. We will only need the fact that MR-LRCs with availability can correct any simple erasure pattern by Theorem \ref{thm:correctable} (hence the bound applies potentially to a larger family of LRCs with availability). To derive a lower bound on the field size, we need to identify certain MDS subcodes of the local codes under consideration. We will identify these by alluding to Definition \ref{def locally correctable erasures}, where we identified the set of all correctable erasure patterns of the local code. After identifying these subcodes, we apply arguments similar to \cite{gopi} to derive the lower bound on the field size.

We will need the following auxiliary lemma, which is \cite[Lemma 1]{gopi}.

\begin{lemma}[\cite{gopi}] \label{lemma projective}
Let $ \mathbb{P}^d_\mathbb{F} $ denote the projective space of dimension $ d $ over a field $ \mathbb{F} $, and let $ X_1, X_2, \ldots, X_g \subseteq \mathbb{P}^d_\mathbb{F} $ be pairwise disjoint subsets of sizes $ |X_i| = t $, for $ i \in [g] $. If $ g \geq d+1 $ and $ |\mathbb{F}| < (g/d-1)t - 4 $, then there exists a hyperplane $ H $ of $ \mathbb{P}^d_\mathbb{F} $ which intersects $ d+1 $ distinct subsets among $ X_1, X_2, \ldots, X_g $.
\end{lemma}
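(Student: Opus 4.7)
The plan is to argue by contradiction through a double counting of incidences between the hyperplanes of $ \mathbb{P}^d_\mathbb{F} $ and the sets $ X_1, \ldots, X_g $. Write $ q = |\mathbb{F}| $ and recall that $ \mathbb{P}^d_\mathbb{F} $ contains exactly $ [d+1]_q := 1 + q + \cdots + q^d $ hyperplanes; a fixed point lies on $ [d]_q $ of them, and a fixed projective line lies in $ [d-1]_q $ of them.

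Suppose, toward a contradiction, that every hyperplane $ H $ of $ \mathbb{P}^d_\mathbb{F} $ meets at most $ d $ of the sets $ X_i $. Then counting the pairs $ (H, i) $ with $ H \cap X_i \neq \varnothing $ by summing over $ H $ first gives
$$ \sum_{H} \#\{ i : H \cap X_i \neq \varnothing \} \leq d \cdot [d+1]_q . $$
Swapping the order of summation, the same quantity equals $ \sum_{i=1}^g S_i $, where $ S_i $ denotes the number of hyperplanes meeting $ X_i $. To lower bound each $ S_i $, view it as the size of the union $ \bigcup_{p \in X_i} \{ H : p \in H \} $ of $ t $ ``stars'' of hyperplanes and apply the Bonferroni inequality truncated at two terms:
$$ S_i \geq t \, [d]_q - \binom{t}{2}[d-1]_q , $$
where $ [d-1]_q $ arises because any two distinct points of $ X_i $ span a projective line. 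Summing over $ i $ and combining with the standing assumption yields the key inequality
$$ g \left( t \, [d]_q - \binom{t}{2}[d-1]_q \right) \leq d \cdot [d+1]_q . $$

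I would then rearrange this using the identities $ [d+1]_q = q[d]_q + 1 $ and $ [d-1]_q \leq [d]_q / q $, divide through by $ [d]_q $, and simplify to arrive at a linear-in-$ q $ bound of the form $ gt \leq dq + \text{lower-order terms} $, or equivalently $ q \geq (g/d - 1) t - O(1) $, contradicting the hypothesis $ q < (g/d - 1)t - 4 $; the ``$ -4 $'' slack is intended to absorb the lower-order corrections produced by the two identities.

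The main obstacle lies in the regime where $ q $ is small compared with $ t $: the crude two-term Bonferroni bound on $ S_i $ becomes vacuous, as $ t[d]_q - \binom{t}{2}[d-1]_q $ can go negative for $ t > 2q+1 $. To handle that regime, I would refine the bound on $ S_i $ by partitioning $ X_i $ according to the projective subspaces spanned by its subsets (points confined to a small subspace are met by proportionally more hyperplanes than the Bonferroni bound suggests) and then re-sum; alternatively, one could retain higher-order Bonferroni terms whose signs correspond to higher-order incidences. The hypothesis $ g \geq d+1 $ is what keeps these refinements feasible, by guaranteeing enough sets among which to distribute incidences so that the inequality actually closes.
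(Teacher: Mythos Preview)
The paper does not prove this lemma at all: it is quoted verbatim as \cite[Lemma 1]{gopi} and used as a black box, so there is no ``paper's own proof'' to compare against. What I can evaluate is whether your argument stands on its own.

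Your double-counting setup is sound and reproduces the easy direction: under the contradiction hypothesis one does get $\sum_i S_i \le d\,[d+1]_q$. The difficulty is entirely in the lower bound on $S_i$, and here the proposal does not close. First, even in the favourable regime $t \le q$ where the two-term Bonferroni bound $S_i \ge t[d]_q - \binom{t}{2}[d-1]_q$ is nonnegative, you do not carry out the algebra; if one does, the factor $1 - (t-1)/(2q)$ that survives after dividing by $[d]_q$ costs a constant factor in the leading term, and it is not at all clear that the slack ``$-4$'' absorbs it. You would need to show $gt \le d(q+t+4)$, and the inequality you derive only yields something like $gt\bigl(1 - (t-1)/(2q)\bigr) \le dq + O(1)$, which is weaker when $t$ is comparable to $q$.

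Second, and more seriously, you explicitly acknowledge that the bound is vacuous once $t > 2q+1$, and your proposed remedies (``partition $X_i$ according to spanned subspaces'', ``retain higher Bonferroni terms'') are not arguments but directions. Higher Bonferroni terms alternate in sign and require controlling $k$-wise intersections of hyperplane stars, i.e.\ the flats spanned by $k$-subsets of $X_i$, which depends on the internal geometry of $X_i$ that you have no handle on. The sentence invoking $g \ge d+1$ to ``keep these refinements feasible'' is not connected to any concrete mechanism. As written, the proof has a genuine gap in the main regime of interest and an unverified constant in the easy regime; it would need substantial additional work (or a different lower bound on $S_i$ that does not degrade with $t$) to become a proof.
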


We will first give a bound when $ a+2 \leq h \leq g $, where $ a = (\delta-1)N $.

\begin{proposition} \label{prop lower bound a+2 <= h}
If $ a = N(\delta-1) $, $ a+2 \leq h \leq g $ and there exists an MR-LRC $ \mathcal{C} \subseteq \mathbb{F}^n $ of type $ (r,\delta,N,t) $ with $ N $-availability and $ h $ heavy parities, then 
\begin{equation}
|\mathbb{F}| \geq t \left( \frac{g}{h-1}-1 \right) \binom{r+\delta -1-t}{\delta-1}^N - 4.
\label{eq lower bound field size}
\end{equation}
\end{proposition}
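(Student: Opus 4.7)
The plan is to adapt the projective geometry argument of \cite{gopi} and apply Lemma \ref{lemma projective} with $d = h-1$. By Lemma \ref{lemma mr as mds in proj}, for every maximal locally correctable pattern $\mathcal{E} \subseteq [n]$, the restricted code $\mathcal{C}|_{\overline{\mathcal{E}}}$ is MDS of dimension $k$ and length $k+h$; equivalently, any $h \times (k+h)$ parity-check matrix $H_\mathcal{E}$ of $\mathcal{C}|_{\overline{\mathcal{E}}}$ has every $h$ columns linearly independent. The objective is to construct, for each $i \in [g]$, a set $X_i \subseteq \mathbb{P}^{h-1}_\mathbb{F}$ of size $t\binom{r+\delta-1-t}{\delta-1}^N$, pairwise disjoint and coherently embedded in a common projective space, so that any hyperplane of $\mathbb{P}^{h-1}_\mathbb{F}$ meeting $h$ distinct $X_i$ would force $h$ linearly dependent columns in some $H_\mathcal{E}$, contradicting the MDS property.

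To build $X_i$, fix a canonical locally correctable pattern $\mathcal{F}_j \subseteq R_j$ for each $j \neq i$, say erasing a fixed $(\delta-1)$-subset of each $R_{j,\ell}\setminus T_j$. For each tuple $s = (S_1, \ldots, S_N)$ where $S_\ell$ is a $(\delta-1)$-subset of $R_{i,\ell}\setminus T_i$, set $\mathcal{F}_i^{(s)} = \bigcup_\ell S_\ell$ and form the complete pattern $\mathcal{E}^{(i,s)} = \mathcal{F}_i^{(s)} \cup \bigcup_{j \neq i}\mathcal{F}_j$, which is maximally locally correctable. The number of such $s$ is exactly $\binom{r+\delta-1-t}{\delta-1}^N$, and since $\bigcup_j T_j \subseteq \overline{\mathcal{E}^{(i,s)}}$, the $t$ columns of $H_{\mathcal{E}^{(i,s)}}$ indexed by $T_i$ give, after a suitable normalization of $H_{\mathcal{E}^{(i,s)}}$, $t$ points of $\mathbb{P}^{h-1}_\mathbb{F}$. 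Taking the union over all $s$ defines $X_i$.

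The main technical obstacle is the coherent normalization of the $H_{\mathcal{E}^{(i,s)}}$ into a common ambient $\mathbb{F}^h$. The natural choice is to select an $h$-subset of ``reference'' coordinates $Y$ drawn from $\bigcup_{j}(T_j \cup (\overline{\mathcal{F}_j}\cap(R_j\setminus T_j)))$ that is disjoint from $T_i$ and non-erased in every relevant $\mathcal{E}^{(i,s)}$, and rescale so that the corresponding columns of every $H_{\mathcal{E}^{(i,s)}}$ become the standard basis of $\mathbb{F}^h$; the MDS property ensures these columns are in fact a basis. The hypothesis $h \geq a+2 = N(\delta-1)+2$ enters precisely here, providing enough global parities relative to the local constraints to guarantee that such a common reference frame exists across the values of $i$ relevant to the final contradiction. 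Once the normalization is fixed, any coincidence of two points of $X_i$ (same $i$, different $(s, x)$) or between different $X_i$ translates into $h$ linearly dependent columns in some $H_\mathcal{E}$, contradicting MDS; hence $|X_i| = t\binom{r+\delta-1-t}{\delta-1}^N$ and the $X_i$ are pairwise disjoint.

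Finally, apply Lemma \ref{lemma projective} with $d = h-1$: the hypothesis $h \leq g$ gives $g \geq d+1$. If $|\mathbb{F}| < (g/(h-1)-1)\cdot t\binom{r+\delta-1-t}{\delta-1}^N - 4$, the lemma produces a hyperplane of $\mathbb{P}^{h-1}_\mathbb{F}$ meeting $h$ distinct sets $X_{i_1}, \ldots, X_{i_h}$ at points coming from specific pairs $(s_\ell, x_\ell)$ with $x_\ell \in T_{i_\ell}$. Forming the glued pattern $\mathcal{E} = \bigcup_\ell \mathcal{F}_{i_\ell}^{(s_\ell)} \cup \bigcup_{j \notin \{i_1, \ldots, i_h\}}\mathcal{F}_j$ yields a maximal locally correctable pattern for which the $h$ columns of $H_\mathcal{E}$ indexed by $x_1, \ldots, x_h$, when placed in the common ambient $\mathbb{F}^h$, all lie on the hyperplane and are therefore linearly dependent, contradicting Lemma \ref{lemma mr as mds in proj}. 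Hence $|\mathbb{F}| \geq t\left(\frac{g}{h-1}-1\right)\binom{r+\delta-1-t}{\delta-1}^N - 4$, as claimed.
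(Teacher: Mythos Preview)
Your proposal has a genuine gap at the normalization step, which you correctly flag as the main obstacle but do not actually resolve. A global reference set $Y$ cannot exist: to be non-erased under every $\mathcal{E}^{(i,s)}$, the set $Y$ must avoid $R_i\setminus T_i$ (every such coordinate is erased for some $s$), and you also require $Y\cap T_i=\varnothing$; since $i$ ranges over all of $[g]$, this forces $Y=\varnothing$. If instead $Y$ is allowed to depend on $i$, the ambient copies of $\mathbb{F}^h$ are no longer canonically identified, and your final gluing step breaks: the glued pattern $\mathcal{E}$ differs from each individual $\mathcal{E}^{(i_\ell,s_\ell)}$ in $h-1$ other blocks, so there is no reason the column of $H_\mathcal{E}$ at $x_\ell$ should coincide, as a projective point, with the column of $H_{\mathcal{E}^{(i_\ell,s_\ell)}}$ at $x_\ell$. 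Columns of parity-check matrices of \emph{restricted} codes simply do not transform coherently across different restrictions, and the bare assertion that $h\ge a+2$ ``provides enough global parities'' to fix this is not an argument.

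The paper sidesteps the issue entirely by working with the parity-check matrix of $\mathcal{C}$ itself, decomposed into local rows $P_i$ and $h$ global rows $Q_i$. For each $i$ and each $(a{+}1)$-set $S$ consisting of one element of $T_i$ together with $\delta-1$ elements from each $R_{i,j}\setminus T_i$, the matrix $P_i|_S\in\mathbb{F}^{a\times(a+1)}$ generates an MDS code (because $S\setminus\{s\}$ is locally correctable for every $s\in S$), hence has a one-dimensional right kernel spanned by some $P_{i,S}^\perp$ with all entries nonzero; one then sets $p_{i,S}=Q_i|_S\cdot P_{i,S}^\perp\in\mathbb{F}^h$. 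These vectors depend only on the data in block $i$ and already live in a single common $\mathbb{F}^h$, so no normalization is needed and the gluing over $h$ distinct blocks is immediate from the MR property. The hypothesis $a+2\le h$ enters only to show that $p_{i,S}$ and $p_{i,T}$ are projectively distinct for $S\ne T$, via $|S\cup T|\le 2a+2\le a+h$ together with the MR condition; it has nothing to do with any reference frame.
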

\begin{proof}
By Definition \ref{def LRC availability}, the code $ \mathcal{C} $ must have a parity-check matrix of the form
$$ H = \left( \begin{array}{cccc}
P_1 & 0 & \ldots & 0 \\
0 & P_2 & \ldots & 0 \\
\vdots & \vdots & \ddots & \vdots \\
0 & 0 & \ldots & P_g \\
Q_1 & Q_2 & \ldots & Q_g 
\end{array} \right) \in \mathbb{F}^{(gN(\delta-1)+h) \times n}, $$
where $ P_i \in \mathbb{F}^{(N(\delta-1)) \times (t + N(r+\delta-1-t))} $, $ Q_i \in \mathbb{F}^{h \times (t + N(r+\delta-1-t))} $, for $ i \in [g] $, the length of $ \mathcal{C} $ is $ n = g(t + N(r+\delta -1-t)) $ and its dimension is $ k = n - gN(\delta-1) - h $.

Consider the family $ \mathcal{S} $ of subsets $ S \subseteq [t + N(r+\delta-1-t)] $ of size $ a+1 $ such that, for all $ s \in S $, the erasure pattern $ S \setminus \{ s \} $ is locally correctable in each block $ R_i $, for $ i \in [g] $, which means that there exists $ j \in [N] $ such that
\begin{enumerate}
\item
$ |(S\setminus \{s\}) \cap R_{1,j}| = \delta - 1 $, and
\item
$ |(R_{1,\ell} \setminus T_1) \cap (S\setminus \{s\})| = \delta - 1 $, for all $ \ell \in [N] \setminus \{ j \} $.
\end{enumerate}
(See (\ref{eq sets T_i R_ij}) and Definition \ref{def locally correctable erasures}, and notice that $ S $ may be translated to the subset $ R_i $, for all $ i \in [g] $, and it satisfies the same properties as above for every $ R_i $.) In particular, since $ S\setminus \{s\} $ is locally correctable in the block $ R_i $, we have that $ P_i|_{S\setminus \{s\}} \in \mathbb{F}^{a \times a} $ is invertible, for all $ s \in S $. In other words, any $ a \times a $ submatrix of $ P_i|_S \in \mathbb{F}^{a \times (a+1)} $ is invertible, which means that the code it generates is MDS.

The remainder of the proof will follow the steps of \cite[Prop. 1]{gopi} (the case $ N = 1 $). However, in that case, $ \mathcal{S} $ was formed by any subset of size $ a+1 $, whereas this is far from the case in this proof. Later we will need to lower bound the size $ |\mathcal{S}| $, but we skip this step for now.

For each $ S \in \mathcal{S} $, let $ P_{i,S}^\perp \in \mathbb{F}^{(a+1) \times 1} $ be a nonzero column matrix such that $ P_i|_S \cdot P_{i,S}^\perp = 0 $. Note that $ P_{i,S}^\perp $ is unique up to multiplication by a scalar in $ \mathbb{F} $, since $ P_i|_S \in \mathbb{F}^{a \times (a+1)} $ is of full rank $ a $. Furthermore, every entry of $ P_{i,S}^\perp $ is nonzero since its transpose generates a one-dimensional MDS code (the dual of an MDS code is again MDS \cite[Ch. 11, Sec. 2, Th. 2]{macwilliamsbook}). Define also
$$ p_{i,S} = Q_i|_S \cdot P_{i,S}^\perp \in \mathbb{F}^{h \times 1}, $$
for $ i \in [g] $ and $ S \in \mathcal{S} $. We have the following:
\begin{claim} \label{claim 1}
For distinct indices $ \ell_1, \ldots, \ell_h \in [g] $ (here we use that $ h \leq g $) and any subsets $ S_1, \ldots, S_h \subseteq \mathcal{S} $, the matrix $ (p_{\ell_1,S_1}, \ldots, p_{\ell_h,S_h}) \in \mathbb{F}^{h \times h} $ is invertible.
\end{claim}
\begin{proof}
If we define the matrices
\begin{equation*}
\begin{split}
M_1 & = \left( \begin{array}{cccc}
P_{\ell_1}|_{S_1} & 0 & \ldots & 0 \\
0 & P_{\ell_2}|_{S_2} & \ldots & 0 \\
\vdots & \vdots & \ddots & \vdots \\
0 & 0 & \ldots & P_{\ell_h}|_{S_h} \\
Q_{\ell_1}|_{S_1} & Q_{\ell_2}|_{S_2} & \ldots & Q_{\ell_h}|_{S_h}
\end{array} \right) \in \mathbb{F}^{(a+1)h \times (a+1)h}, \\
M_2 & = \left( \begin{array}{cccc}
P_{\ell_1,S_1}^\perp & 0 & \ldots & 0 \\
0 & P_{\ell_2,S_2}^\perp & \ldots & 0 \\
\vdots & \vdots & \ddots & \vdots \\
0 & 0 & \ldots & P_{\ell_h,S_h}^\perp
\end{array} \right) \in \mathbb{F}^{(a+1)h \times h}, \\
M_3 & = \left( \begin{array}{cccc}
0 & 0 & \ldots & 0 \\
0 & 0 & \ldots & 0 \\
\vdots & \vdots & \ddots & \vdots \\
0 & 0 & \ldots & 0 \\
p_{\ell_1,S_1} & p_{\ell_2,S_2} & \ldots & p_{\ell_h,S_h}
\end{array} \right) \in \mathbb{F}^{(a+1)h \times h},
\end{split}
\end{equation*}
then we have that $ M_1M_2 = M_3 $. Since $ \mathcal{C} $ is an MR-LRC and $ S_i $ is formed by a locally correctable pattern in $ R_{\ell_i} $ plus one more element, for $ i \in [h] $, then $ M_1 $ must be invertible by Theorem \ref{thm:correctable}. Since the columns of $ M_2 $ are non-zero and have pairwise disjoint supports, then $ M_2 $ is of full column rank. Thus we deduce that $ M_3 $ is of full column rank too, which means that $ (p_{\ell_1,S_1}, \ldots, p_{\ell_h,S_h}) \in \mathbb{F}^{h \times h} $ is invertible.
\end{proof}

In particular, $ p_{i,S} $ and $ p_{j,T} $ are not scalar multiples of each other for any subsets $ S,T \subseteq \mathcal{S} $ (even if $ S=T $), for distinct $ i,j \in [g] $ (note that this requires $ h \geq 2 $). We next show that the same holds for $ p_{i,S} $ and $ p_{i,T} $ for distinct $ S,T \subseteq \mathcal{S} $.
\begin{claim} \label{claim 2}
For every $ i \in [g] $ and any $ S,T \subseteq \mathcal{S} $ with $ S \neq T $, the vectors $ p_{i,S} $ and $ p_{i,T} $ are not scalar multiples of each other.
\end{claim}
\begin{proof}
Assume that there exists $ \lambda \in \mathbb{F}^* $ such that $ p_{i,S} = \lambda p_{i,T} $. Then
\begin{equation}
\left( \begin{array}{c}
P_i|_S \\
Q_i|_S
\end{array} \right) P_{i,S}^\perp - \lambda \left( \begin{array}{c}
P_i|_T \\
Q_i|_T
\end{array} \right) P_{i,T}^\perp = \left( \begin{array}{c}
0 \\
p_{i,S}
\end{array} \right) - \lambda \left( \begin{array}{c}
0 \\
p_{i,T}
\end{array} \right) = 0.
\label{eq proof field size lin combination}
\end{equation}
Since every entry in $ P_{i,S}^\perp $ and $ P_{i,T}^\perp $ is non-zero and there exists at least one element in $ S \setminus T $ (or in $ T \setminus S $), then (\ref{eq proof field size lin combination}) implies that we have a non-trivial linear dependence of the columns of 
\begin{equation}
\left( \begin{array}{c}
P_i|_{S \cup T} \\
Q_i|_{S \cup T}
\end{array} \right) \in \mathbb{F}^{(a+h) \times |S \cup T|}.
\label{eq proof field size matrix full rank}
\end{equation}
However, since $ a+2 \leq h $, then
$$ | S \cup T | \leq |S|+|T| = 2a+2 \leq a+h. $$
In other words, $ | T \setminus S | \leq h-1 $. Thus $ S \cup T $ is a set of $ a $ coordinates in $ S $ that are locally correctable plus another $ h $ coordinates. Since $ \mathcal{C} $ is an MR-LRC, then $ \mathcal{C} $ can correct the erasure pattern $ S \cup T $ by Theorem \ref{thm:correctable}, and hence the matrix in (\ref{eq proof field size matrix full rank}) must be of full column rank $ |S \cup T| $. Therefore we arrive at a contradiction, and the claim is proven.
\end{proof}

In conclusion, by Claims \ref{claim 1} and \ref{claim 2} (which use the assumptions $ a+2 \leq h \leq g $), no two vectors $ p_{i,S} $ and $ p_{j,T} $, for $ i,j \in [g] $ and $ S,T \subseteq \mathcal{S} $, are scalar multiples of each other, that is, they define different projective points in $ \mathbb{P}^{h-1}_\mathbb{F} $, unless $ (i,S) = (j,T) $. Define now $ X_i = \{ p_{i,S} : S \subseteq \mathcal{S} \} $, for $ i \in [g] $. Then $ |X_i| = |\mathcal{S}| $, for $ i \in [g] $, and $ X_1, \ldots, X_g \subseteq \mathbb{P}^{h-1}_\mathbb{F} $ are pairwise disjoint. By Claim \ref{claim 1}, there is no hyperplane in $ \mathbb{P}^{h-1}_\mathbb{F} $ that contains $ h $ points from distinct subsets among $ X_1, \ldots, X_g $. Therefore, by Lemma \ref{lemma projective}, 
\begin{equation}
|\mathbb{F}| \geq \left( \frac{g}{h-1} - 1 \right) \cdot |\mathcal{S}| - 4.
\label{eq lower bound field size proof 1}
\end{equation}
The only step left is finding a lower bound on $ |\mathcal{S}| $. Consider $ \mathcal{T} $ as the collection of subsets $ S \subseteq [t+N(r+\delta-1-t)] $ formed by one element in $ T_1 $ and any $ \delta-1 $ elements in $ R_{1,j}\setminus T_1 $, for every $ j \in [N] $ (recall the notation from (\ref{eq sets T_i R_ij})). We now check that $ S \in \mathcal{S} $, for all $ S \in \mathcal{T} $. First we observe that $ |S| = N(\delta-1)+1 = a+1 $. Next, let $ s \in S $. If $ s \in T_1 $, then $ S\setminus \{ s \} $ is formed by a disjoint union of subsets of $ R_{1,j}\setminus T_1 $, for $ j \in [N] $, each of size $ \delta-1 $. Thus $ S \setminus \{s \} $ is clearly locally correctable (Definition \ref{def locally correctable erasures}). Now assume that $ s \in R_{1,j} $, for some $ j \in [N] $. Then $ |(S\setminus \{s\}) \cap R_{1,j}| = \delta - 1 $ and, therefore, $ |(R_{1,\ell} \setminus T_1) \cap (S\setminus \{s\})| = \delta - 1 $, for all $ \ell \in [N] \setminus \{ j\} $. Thus $ S \setminus \{s\} $ is also locally correctable. In conclusion, $ S \in \mathcal{S} $. That is, $ \mathcal{T} \subseteq \mathcal{S} $ and
$$ |\mathcal{S}| \geq |\mathcal{T}| \geq t \binom{r+\delta-1-t}{\delta - 1}^N. $$
Combining this lower bound with (\ref{eq lower bound field size proof 1}), we obtain (\ref{eq lower bound field size}), and we are done.
\end{proof}

In the previous bound we used that $ a+2 \leq h \leq g $. We now adapt the bound to the case $ h \leq \min \{ a+1,g \} $.

\begin{proposition} \label{prop lower bound h < a+2}
If $ a = N(\delta-1) $, $ h \leq \min \{ a+1,g \} $ and there exists an MR-LRC $ \mathcal{C} \subseteq \mathbb{F}^n $ of type $ (r,\delta,N,t) $ with $ N $-availability and $ h $ heavy parities, then 
\begin{equation}
|\mathbb{F}| \geq t \left( \frac{g}{h-1}-1 \right) \binom{r + \left\lfloor \frac{h-2}{N} \right\rfloor -t}{\left\lfloor \frac{h-2}{N} \right\rfloor}^N - 4.
\label{eq lower bound field size h < a+2}
\end{equation}
\end{proposition}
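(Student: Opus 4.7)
The plan is to reduce Proposition \ref{prop lower bound h < a+2} to Proposition \ref{prop lower bound a+2 <= h} via a puncturing argument. Set $b = \lfloor(h-2)/N\rfloor$; the assumption $h \leq a+1 = N(\delta-1)+1$ forces $b \leq \delta - 2$, so the integer $\delta - 1 - b$ is positive and at most $|R_{i,j}\setminus T_i| = r+\delta-1-t$. Choose any $X \subseteq [n]$ containing exactly $\delta - 1 - b$ coordinates from each $R_{i,j}\setminus T_i$, for every $j \in [N]$ and $i \in [g]$, and let $\mathcal{C}' \subseteq \mathbb{F}^{n'}$ be the code obtained from $\mathcal{C}$ by puncturing at $X$, where $n' = g(t+N(r+b-t))$.

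The key step is to show that $\mathcal{C}'$ is an MR-LRC of type $(r, b+1, N, t)$ with $h$ heavy parities. First, since the local codes of $\mathcal{C}$ are $(r+\delta-1,r)$ MDS and puncturing an MDS code yields an MDS code, the local codes of $\mathcal{C}'$ are $(r+b,r)$ MDS. Second, $\mathcal{C}'$ has the same dimension $k$ as $\mathcal{C}$: if $c \in \mathcal{C}$ vanishes on $[n]\setminus X$, then $c$ has $|R_{i,j}\setminus X| = r+b \geq r$ zeros within each local set $R_{i,j}$, which by the MDS property forces $c|_{R_{i,j}}=0$, hence $c=0$. Consequently the number of heavy parities of $\mathcal{C}'$ is $h' = n' - gNb - k = g(t+N(r-t)) - k = h$. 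Third, for the MR correction property of $\mathcal{C}'$, I would lift any candidate $\mathcal{E}' = \mathcal{E}'_1 \cup \mathcal{E}'_2$ (with $\mathcal{E}'_1$ locally correctable for type $(r,b+1,N,t)$ and $|\mathcal{E}'_2| \leq h$) to $\mathcal{E} = (\mathcal{E}'_1 \cup X) \cup \mathcal{E}'_2$ in $\mathcal{C}$. The lifted pattern $\mathcal{E}'_1 \cup X$ is locally correctable in $\mathcal{C}$: for each $i$, choosing the same $j_i \in [N]$ that works for $\mathcal{E}'_1$ in $\mathcal{C}'$, the count within $R_{i,j_i}$ is at most $b + (\delta-1-b) = \delta-1$, and likewise within $R_{i,\ell}\setminus T_i$ for $\ell \neq j_i$. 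Since $\mathcal{C}$ corrects $\mathcal{E}$ by hypothesis, the projection of the recovered codeword corrects $\mathcal{E}'$ in $\mathcal{C}'$.

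With $\mathcal{C}'$ identified as an MR-LRC of type $(r, b+1, N, t)$ over the same field $\mathbb{F}$, I would apply Proposition \ref{prop lower bound a+2 <= h}. The new parameters are $a' = Nb$ and $h' = h$, so the required hypothesis $a' + 2 \leq h' \leq g$ reads $Nb + 2 \leq h \leq g$, which holds because $Nb \leq h-2$ by the definition of $b$ and $h \leq g$ is assumed. Substituting into (\ref{eq lower bound field size}) with $\delta$ replaced by $b+1$ produces exactly (\ref{eq lower bound field size h < a+2}). The main subtlety I anticipate is the availability condition: if $t > b$ then $\mathcal{C}'$ formally fails $N$-availability ($t \leq \delta'-1 = b$), but the proof of Proposition \ref{prop lower bound a+2 <= h} nowhere uses $t \leq \delta-1$ — it relies only on the MR-LRC type structure (the sets $T_i$, $R_{i,j}$) and on the MR erasure-correction property, both of which hold for $\mathcal{C}'$ as verified above, so the argument goes through.
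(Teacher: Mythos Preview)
Your puncturing argument is correct and takes a genuinely different route from the paper's. The paper instead re-runs the proof of Proposition~\ref{prop lower bound a+2 <= h} with a restricted family: fixing subsets $S_j \subseteq R_{1,j}\setminus T_1$ of size $\ell = \delta-1-\lfloor(h-2)/N\rfloor$ for each $j\in[N]$, it considers only those $S\in\mathcal{T}$ that contain every $S_j$, so that any two such $S,T$ share at least $\ell N \geq a-h+2$ elements and the inequality $|S\cup T|\leq a+h$ needed in Claim~\ref{claim 2} survives; the size of this restricted family is exactly $t\binom{r+\lfloor(h-2)/N\rfloor-t}{\lfloor(h-2)/N\rfloor}^N$. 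Your approach punctures the code at the analogues of these fixed $S_j$ across all blocks, thereby converting the restriction of the family into a reduction of the parameter $\delta$ to $b+1$; this is more modular (Proposition~\ref{prop lower bound a+2 <= h} is invoked essentially as a black box) and makes the appearance of the binomial $\binom{r+b-t}{b}$ transparent. The paper's version avoids having to re-establish the MR property for a new code and sidesteps the availability caveat you correctly flagged (indeed, the proof of Proposition~\ref{prop lower bound a+2 <= h} never uses $t\leq\delta-1$), but both arguments are of comparable length and yield the identical bound.
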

\begin{proof}
As done in \cite[Prop. 2]{gopi}, we only need to adjust the proof of Proposition \ref{prop lower bound a+2 <= h}, as follows. Let the notation be as in that proof, set $ \ell = \delta - 1 - \lfloor (h-2)/N \rfloor $, and consider subsets $ S_j \subseteq R_{1,j} \setminus T_1 \subseteq [t+N(r+\delta -1-t)] $ of size $ |S_j| = \ell $, for $ j \in [N] $ (this is possible since $ \ell \leq \delta - 1 \leq |R_{1,j}\setminus T_1| $ by Definition \ref{def LRC availability}). Consider the subfamily $ \mathcal{S}^\prime \subseteq \mathcal{T} $ formed by the sets $ S \in \mathcal{T} $ such that $ S_j \subseteq S $, for all $ j \in [N] $. 

In the proof of Proposition \ref{prop lower bound a+2 <= h}, the only step where we used $ a+2 \leq h $ was in the proof of Claim \ref{claim 2}, to show that $ |S\cup T| \leq 2a+2 \leq a+h $, for any two distinct $ S,T \in \mathcal{S} $. Now we do not have that $ 2a+2 \leq a+h $. However, since
\begin{equation*}
\begin{split}
\bigcup_{j=1}^N S_j & \subseteq S \cap T, \textrm{ and} \\
\left| \bigcup_{j=1}^N S_j \right| & = \ell N \geq N(\delta - 1) - (h-2) = a-h+2 , 
\end{split}
\end{equation*}
then we still have that
$$ |S \cup T| = |S|+|T|-|S\cap T| = 2a+2 - |S\cap T| \leq a+h, $$
for any two distinct $ S,T \subseteq \mathcal{S}^\prime $. The remainder of the proof is exactly as that of Proposition \ref{prop lower bound a+2 <= h}. However, we need to replace the collection $ \mathcal{T} $ by $ \mathcal{S}^\prime $. Since
$$ |\mathcal{S}^\prime| = t \binom{r+\delta -1-t -\ell}{\delta -1 -\ell}^N = t \binom{r + \left\lfloor \frac{h-2}{N} \right\rfloor -t}{\left\lfloor \frac{h-2}{N} \right\rfloor}^N , $$
the bound (\ref{eq lower bound field size h < a+2}) follows.
\end{proof}

Combining the bounds (\ref{eq lower bound field size}) and (\ref{eq lower bound field size h < a+2}), we obtain the following asymptotic lower bound when $ h \leq g $. 

\begin{corollary} \label{cor lower bounds field h <= g}
If $ h \leq g $ and there exists an MR-LRC $ \mathcal{C} \subseteq \mathbb{F}^n $ of type $ (r,\delta,N,t) $ with $ N $-availability and $ h $ heavy parities, then 
\begin{equation}
|\mathbb{F}| = \Omega_{h,\delta,N}\left( gt \cdot r^{\min \{ N(\delta -1) , N \left\lfloor \frac{h-2}{N} \right\rfloor \}} \right) .
\label{eq lower bound field size h <= g}
\end{equation}
\end{corollary}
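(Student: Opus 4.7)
The plan is to obtain the corollary as a clean asymptotic consolidation of Propositions \ref{prop lower bound a+2 <= h} and \ref{prop lower bound h < a+2}, split according to the size of $h$ relative to $a+1=N(\delta-1)+1$. Throughout, $h$, $\delta$, $N$ (and hence $a$) are treated as fixed constants, while $r$, $g$, $t$ are the variables that may grow; this is exactly what the subscript $\Omega_{h,\delta,N}$ indicates.

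First I would split into two cases. If $h\ge a+2$, then Proposition \ref{prop lower bound a+2 <= h} applies and I would expand the binomial:
\[
\binom{r+\delta-1-t}{\delta-1} \;=\; \Theta_{\delta,t}\!\left(r^{\delta-1}\right),
\]
so $\binom{r+\delta-1-t}{\delta-1}^N = \Theta_{\delta,N,t}(r^{N(\delta-1)})$. If instead $h\le a+1$, Proposition \ref{prop lower bound h < a+2} applies and I would expand
\[
\binom{r+\lfloor (h-2)/N\rfloor -t}{\lfloor (h-2)/N\rfloor}^{N} \;=\; \Theta_{h,N,t}\!\left(r^{N\lfloor (h-2)/N\rfloor}\right),
\]
where the exponent $N\lfloor (h-2)/N\rfloor$ is strictly smaller than $N(\delta-1)$ in this range. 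In either case the exponent of $r$ is exactly $\min\{N(\delta-1),N\lfloor (h-2)/N\rfloor\}$, since the min selects $N(\delta-1)$ in the first regime and $N\lfloor(h-2)/N\rfloor$ in the second.

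Next I would handle the prefactor $t\bigl(\tfrac{g}{h-1}-1\bigr)$. Using $h\le g$, one has
\[
\frac{g}{h-1}-1 \;=\; \frac{g-(h-1)}{h-1} \;\ge\; \frac{g}{2(h-1)}
\]
for $g\ge 2(h-1)$, which is the only range that matters asymptotically in $g$; for smaller $g$ the claim is trivial because we can absorb a constant into the $\Omega_{h,\delta,N}$ hidden constant. Thus $t\bigl(\tfrac{g}{h-1}-1\bigr)=\Omega_h(gt)$. Multiplying this with the binomial estimate and absorbing the additive $-4$ into the hidden constant yields
\[
|\mathbb{F}| \;=\; \Omega_{h,\delta,N}\!\left( gt\cdot r^{\min\{N(\delta-1),\,N\lfloor (h-2)/N\rfloor\}}\right),
\]
which is the claim.

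There is no real obstacle here: once Propositions \ref{prop lower bound a+2 <= h} and \ref{prop lower bound h < a+2} are in hand, the corollary is purely an asymptotic bookkeeping exercise, with the only mildly delicate point being the verification that the two exponent expressions, $N(\delta-1)$ in the high-$h$ regime and $N\lfloor(h-2)/N\rfloor$ in the low-$h$ regime, are precisely selected by the single $\min$ appearing in the statement.
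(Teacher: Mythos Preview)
Your proposal is correct and follows essentially the same approach as the paper: the paper's proof is a one-line invocation of the two propositions together with the elementary bound $\binom{u}{v}\ge (u/v)^v$, while you spell out the same case split and the same asymptotic estimates in more detail. One cosmetic remark: since $t\le \delta-1$ by hypothesis, the $t$ you place in the subscripts $\Theta_{\delta,t}$ and $\Theta_{h,N,t}$ can be absorbed into the $\delta$-dependence, which is why the final bound is stated as $\Omega_{h,\delta,N}$ without $t$.
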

\begin{proof}
Combine (\ref{eq lower bound field size}) and (\ref{eq lower bound field size h < a+2}), together with the inequality $ \binom{u}{v} \geq (u/v)^v $, for integers $ 0 \leq v \leq u $, and recall that $ h $, $ \delta $ and $ N $ are considered as constants.
\end{proof}

\section*{Acknowledgements}

The authors wish to thank the anonymous reviewers for their very helpful comments, which greatly improved this paper.

This work was done in part while the authors were visiting the Simons Institute for the Theory of Computing (University of California, Berkeley). 

The first author is supported by MCIN/AEI/10.13039/501100011033 and the European Union NextGenerationEU/PRTR (Grant no. TED2021-130358B-I00), and by MICIU/AEI/ 10.13039/501100011033 and ERDF/EU (Grant no. PID2022-138906NB-C21).


\bibliographystyle{plain}

\end{document}